\definecolor{customblue}{HTML}{BDD7EE}
\definecolor{deepblue}{HTML}{0070C0}
\theoremstyle{plain}
\newtheorem{theorem}{Theorem}
\theoremstyle{definition}
\newtheorem{assumption}{Assumption}
\theoremstyle{remark}
\newtheorem{remark}{Remark}
\newtheorem*{proposition*}{Proposition}
\newcommand{\name}{\texttt{EGGV}\xspace}
\begin{document}

\title{The Gradient Puppeteer: Adversarial Domination in Gradient Leakage Attacks through Model Poisoning}

\author{Kunlan~Xiang,
        Haomiao~Yang,~\IEEEmembership{Senior Member,~IEEE,}
        Meng Hao,~\IEEEmembership{Member,~IEEE,}
        Shaofeng Li,~\IEEEmembership{Member,~IEEE,}
        Haoxin Wang,
        Zikang Ding,
        Wenbo Jiang,~\IEEEmembership{Member,~IEEE,}
        Tianwei~Zhang,~\IEEEmembership{Member,~IEEE,}
        % and~Hongwei~Li,~\IEEEmembership{Fellow,~IEEE}

    \IEEEcompsocitemizethanks{
        \IEEEcompsocthanksitem K. Xiang, H. Yang, Z. Ding, and W. Jiang
        % , and H. Li
        are with the School of Computer Science and Engineering, University of Electronic Science and Technology of China. E-mail: klxiang@std.uestc.edu.cn; haomyang@uestc.edu.cn; dzkang0312@163.com; wenbo\_jiang@uestc.edu.cn.
        % hongweili@uestc.edu.cn.
        \IEEEcompsocthanksitem M. Hao is a research scientist at Singapore Management University. E-mail: menghao303@gmail.com.
        \IEEEcompsocthanksitem S. Li is an assistant professor at the School of Computer Science and Engineering at Southeast University. E-mail: shaofengli2013@gmail.com.
        \IEEEcompsocthanksitem H. Wang is with the Sichuan University. E-mail: whx1122@stu.scu.edu.cn.
        \IEEEcompsocthanksitem T. Zhang is with the School of Computer Science and Engineering, Nanyang Technological University, Singapore 639798. E-mail: tianwei.zhang@ntu.edu.sg.
        \IEEEcompsocthanksitem Corresponding author: Haomiao Yang.
        % \IEEEcompsocthanksitem
        \IEEEcompsocthanksitem This work has been submitted to the IEEE for possible publication. Copyright may be transferred without notice, after which this version may no longer be accessible.
    }      
}

% <-this % stops a space
% \thanks{This paper was produced by the IEEE Publication Technology Group. They are in Piscataway, NJ.}% <-this % stops a space
% \thanks{Manuscript received April 19, 2021; revised August 16, 2021.}

% The paper headers
% \markboth{Journal of \LaTeX\ Class Files,~Vol.~14, No.~8, August~2021}%
% {Shell \MakeLowercase{\textit{et al.}}: A Sample Article Using IEEEtran.cls for IEEE Journals}

% \IEEEpubid{0000--0000/00\$00.00~\copyright~2021 IEEE}
% Remember, if you use this you must call \IEEEpubidadjcol in the second
% column for its text to clear the IEEEpubid mark.

\maketitle

\begin{abstract}
In Federated Learning (FL), clients share gradients with a central server while keeping their data local. However, malicious servers could deliberately manipulate the models to reconstruct clients' data from shared gradients, posing significant privacy risks. Although such active gradient leakage attacks (AGLAs) have been widely studied, they suffer from two severe limitations: (i) coverage: no existing AGLAs can reconstruct all samples in a batch from the shared gradients; (ii) stealthiness: no existing AGLAs can evade principled checks of clients. In this paper, we address these limitations with two core contributions. First, we introduce a new theoretical analysis approach, which uniformly models AGLAs as backdoor poisoning. This analysis approach reveals that the core principle of AGLAs is to bias the gradient space to prioritize the reconstruction of a small subset of samples while sacrificing the majority, which theoretically explains the above limitations of existing AGLAs. Second, we propose \underline{E}nhanced \underline{G}radient \underline{G}lobal \underline{V}ulnerability (EGGV), the first AGLA that achieves complete attack coverage while evading client-side detection. In particular, EGGV employs a gradient projector and a jointly optimized discriminator to assess gradient vulnerability, steering the gradient space toward the point most prone to data leakage. Extensive experiments show that EGGV achieves complete attack coverage and surpasses state-of-the-art (SOTA) with at least a 43\% increase in reconstruction quality (PSNR) and a 45\% improvement in stealthiness (D-SNR).
\end{abstract}
\begin{IEEEkeywords}
Federated learning, gradient leakage attack, model poisoning, and malicious attack.
\end{IEEEkeywords}

\section{Introduction}
% Federated Learning (FL) \cite{mcmahan2017communication, Towardsk, chilimbi2014project} enables multiple clients to collaboratively train a model without sharing raw data. In the standard FL protocol, the server sends the global model to clients, who then use their own data to compute the gradient and send it back to the server. While FL avoids direct transmission of client data, shared gradients are demonstrated to leak client data, an attack known as Gradient Leakage Attacks (GLAs) \cite{zhu2019deep}. GLAs can be broadly classified into two major categories \cite{nowak2024qbi}: Honest Gradient Leakage Attacks (HGLAs) \cite{RGAP, yang2022using, 285479} and Malicious Gradient Leakage Attacks (MGLAs) \cite{boenisch2023curious, zhao2023loki, fowl2021robbing, nowak2024qbi}.

Federated Learning (FL) \cite{mcmahan2017communication, Towardsk, chilimbi2014project} has emerged as a promising framework for privacy-preserving distributed learning, allowing multiple clients to jointly train a global model without sharing raw data. In each communication round, the server distributes the model to clients, who compute gradients on their private data and return them to the server for aggregation. However, a growing body of research has revealed that these shared gradients can be exploited by adversaries to reconstruct private clients' data, an attack known as Gradient Leakage Attacks (GLAs) \cite{zhu2019deep}. GLAs severely compromise the core privacy promise of FL and are broadly categorized into two types \cite{nowak2024qbi}: Passive Gradient Leakage Attacks (PGLAs) \cite{RGAP, yang2022using, 285479}, and Active Gradient Leakage Attacks (AGLAs) \cite{boenisch2023curious, zhao2023loki, fowl2021robbing, nowak2024qbi}.

% While FL avoids direct transmission of client data, shared gradients are demonstrated to leak client data, an attack known as Gradient Leakage Attacks (GLAs) \cite{zhu2019deep}.

% In HGLAs, attackers reconstruct client data from the gradients shared in the FL system, without manipulating the model architecture, model parameters, and the FL protocol. However, the effectiveness of these attacks heavily relies on the initialization of model parameters. Our experimental results in Table \ref{compare2} and Figure \ref{visualresults} first reveal that even HGLAs previously considered effective fail to attack the model with improper initialization methods. This limitation led to the development of MGLAs, which implement reconstruction by modifying the model structure and model parameters. 

In PGLAs, attackers reconstruct client data from the gradients shared in the FL system, without manipulating the model architecture, model parameters, and the FL protocol. Zhu et al. \cite{zhu2019deep} first demonstrate the possibility of reconstructing data by optimizing randomly initialized pixel data to produce gradients that match the observed ones. Subsequent methods such as iDLG \cite{zhao2020idlg}, IG \cite{geiping2020inverting}, and STG \cite{yin2021see} improve the reconstruction results by adding image priors as an optimization objective.
% Meanwhile, methods such as GI \cite{jeon2021gradient}, GGL \cite{li2022auditing}, and GGDM \cite{gu2024federated} further enhance the capabilities of these attacks with generative images priors.
However, the effectiveness of these attacks heavily depends on the initialization of model parameters. The model parameters in an unfavorable position produce the gradients lacking data features, rendering these attacks ineffective \cite{du2024sok}. For instance, when model parameters are initialized to zero, the gradients will also be zero and thus contain no data feature, thereby preventing any successful reconstruction by existing GLAs. Our experimental results in Table \ref{compare2} and Figure \ref{visualresults} demonstrate that, for the first time, even previously considered effective PGLAs fail to reconstruct the data when improper initialization methods are used by the server. 
\begin{figure}[t]
    \centering
    \vspace{0.1cm}  %调整图片与上文的垂直距离
    \includegraphics[scale=0.07]{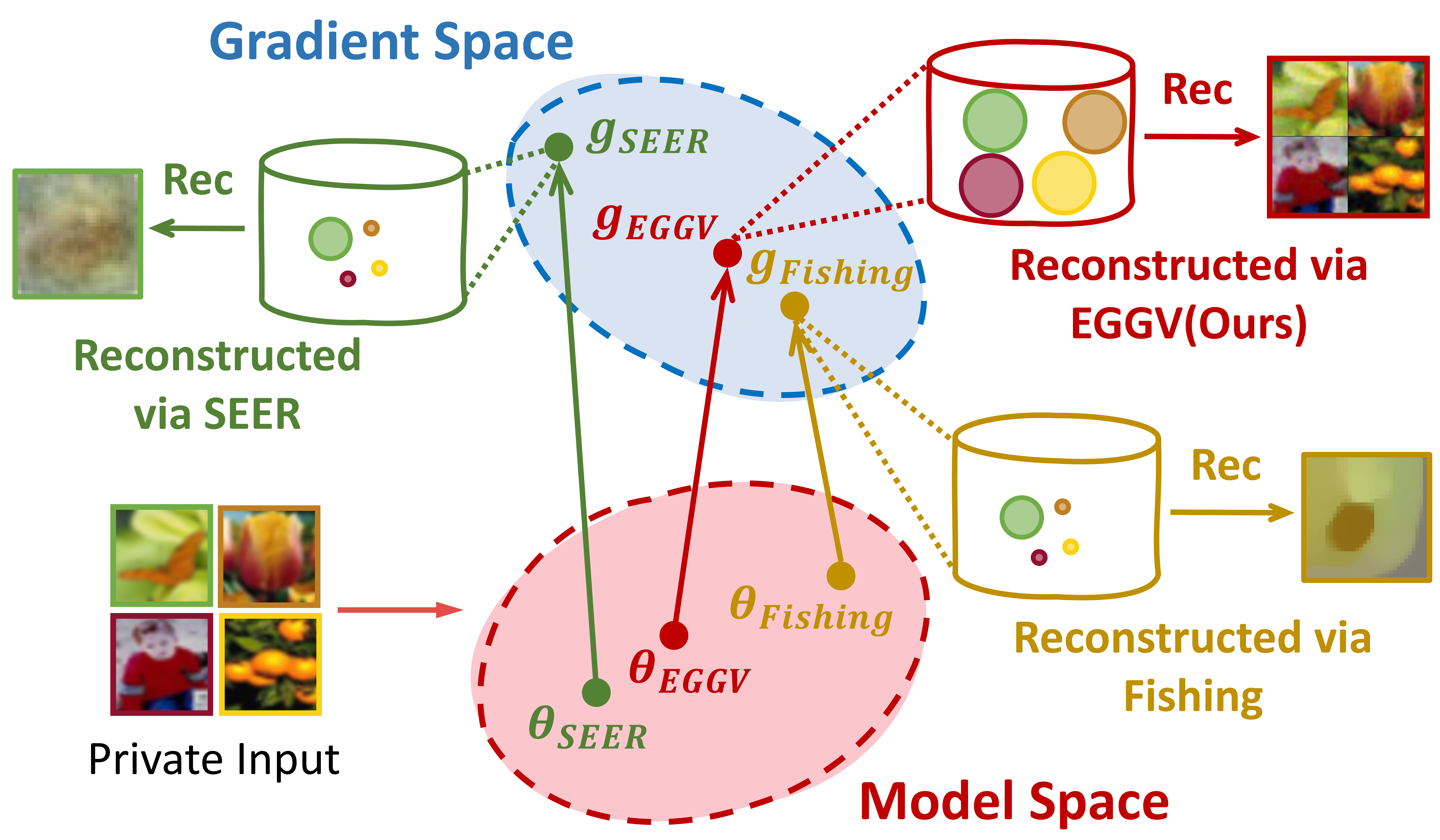}
    % \vspace{-0.3cm}
    \caption{Illustration of the core principles and reconstruction outcomes of Fishing \cite{wen2022fishing}, SEER \cite{Garov2024Hiding}, and \name (Ours). Each cylinder denotes the gradient space of a batch, with inner circle sizes indicating per-sample information retention. Existing AGLAs amplify a few samples while suppressing others, leading to partial recovery. In contrast, \name (Ours) uniformly enhances the feature of all samples in the gradient, enabling full-batch reconstruction.
    }
    % \vspace{-0.3cm}
    \label{showdiff}
\end{figure}

In contrast, in AGLAs, attackers achieve data reconstruction by modifying the model structure and parameters. Some AGLAs \cite{boenisch2023curious, zhao2023loki, fowl2021robbing, nowak2024qbi} achieve direct and accurate reconstruction by inserting a fully connected layer at the beginning of the model and modifying the parameters of this layer. However, such structural modifications are easily detected. Other AGLAs \cite{wen2022fishing, Garov2024Hiding} improve PGLAs by reducing the effective samples used for gradient computation. However, this reduction also limits the improved reconstruction of PGLAs to only a few samples or even just a single sample, as shown in Figure \ref{showdiff}. Moreover, recent work \cite{Garov2024Hiding} reveals that all prior AGLAs are detectable by detection metric. In summary, existing AGLAs exhibit several limitations in attack coverage and stealthiness.

%We refer to these attacks as gradient-biased AGLAs and their gradients as biased gradients. Despite the empirical success in existing AGLAs, they lack a unified theoretical foundation to explain their underlying mechanisms and limitations.

% Moreover, recent work \cite{Garov2024Hiding} reveals that all prior AGLAs are detectable by the proposed detection metric named Disaggregation Signal-to-Noise Ratio (D-SNR), exploiting a bias they introduce in the gradient space.

% \textbf{Weak client-side detectability.} D-SNR is the only existing detection metric for MGLAs. However, our experimental results show that it fails to distinguish between poisoned and clean gradients on large batches. \textbf{This work: effective detection.} We propose a robust detection metric, dubbed, the \underline{i}mproved \underline{D}isaggregation \underline{S}ignal-to-\underline{N}oise \underline{R}atio (iD-SNR). iD-SNR is a groundbreaking detection metric that overcomes the batch size dependency of D-SNR through a novel normalization mechanism, effectively amplifying biased gradients to reveal malicious modifications. ID-SNR motivates us to fundamentally rethink the stealthiness of MGLAs.

This paper addresses the above critical challenges with two major contributions. First, we propose a new theoretical approach to rethinking and analyzing AGLAs. The core of our approach is the introduction of a parameter $\lambda$, which can quantify the relative contribution of each sample within a batch to the activation of neurons in each class. It discloses that the fundamental principle of existing AGLAs is to prioritize the reconstruction of a small subset of samples with specific properties, while sacrificing the majority of other samples. Such properties are analogous to triggers in backdoor attacks. This principle explains the critical limitations of existing AGLAs, underscoring the pressing need for an advanced attack with a fundamentally different principle. 

%\textbf{This principle inherently leads to two critical limitations: (i) poor stealthiness: all existing AGLAs are detectable by principled client-side detection; and (ii) incomplete attack coverage: all AGLAs only reconstruct part samples in the attacked batch.} These limitations underscore the pressing need for an attack method with a fundamentally different principle.

Second, based on the above theoretical analysis, we propose \underline{E}nhanced \underline{G}radient \underline{G}lobal \underline{V}ulnerability (\name), a novel AGLA that ensures complete attack coverage and evades client-side detection. 
%parameter poisoning method that ensures complete attack coverage and evades client-side detection.
Different from existing attacks, \name equally enhances the gradient vulnerability of all samples in a batch, as illustrated in Figure \ref{showdiff}. Its key insights include: (i) treating gradients as a latent space of data, with the forward and backward propagation of the model as an encoding process; (ii) introducing a discriminator jointly trained with the model to access the gradient vulnerability of the model. Importantly, \name opens up a new research path thoroughly different from gradient-biased AGLAs.

\textbf{Our contributions are summarized as:} 
\begin{itemize}

\item We introduce a backdoor-theoretic perspective to frame the fundamental principles of AGLAs and identify two critical limitations in their principles: incomplete attack coverage and poor stealthiness.

\item We propose a novel GLA \name that extends AGLAs to achieve both complete coverage and enhanced stealthiness. We further provide theoretical guarantees for the existence of optimal poisoned parameters for gradient leakage and the stealthiness of the proposed attack.

\item Extensive experiments show that the proposed \name significantly surpasses SOTA methods, achieving at least a 43\% improvement in reconstruction quality (measured by PSNR) and a 45\% enhancement in stealthiness (measured by D-SNR) with a complete attack coverage.

\end{itemize}

\textbf{Paper Organization.} The remainder of this paper is organized as follows. Section~\ref{sec:related-work} reviews related work on GLAs. Section~\ref{sec:background} provides the necessary background, including the FL framework, the threat model, and motivation for this work. In Section~\ref{sec:backdoor-analysis}, we present a backdoor-theoretic analysis of AGLAs to reveal the fundamental limitations in their principles. Section~\ref{sec:EGGV} introduces the proposed attack, detailing its formulation, optimization approach, and theoretical guarantees. Experimental results are presented in Section~\ref{sec:experiments}. Finally, Section~\ref{sec:conclusion} draw a conclusion.

\section{Related Work}
\label{sec:related-work}
In this section, we provide an overview of prior GLAs.

% \subsection{Related Work}
\subsection{Passive Gradient Leakage Attack (PGLA)}
Most PGLAs are carried out through gradient matching. The DLG attack \cite{zhu2019deep} is the first to optimize dummy inputs and corresponding dummy labels by matching their gradients to the observed gradients. Its optimization objective is: % formulated as follows:
\begin{equation}
\label{dlg}
    x'^*,y'^*=\underset{x',y'}{\operatorname*{argmin}}\left\|\frac{\partial\ell(F(x',\theta),y')}{\partial\theta}-\nabla\theta\right\|^2,
\end{equation}
DLG works well on small models, such as LeNet-5 \cite{lecun1998gradient}. Later, iDLG \cite{zhao2020idlg} further improves DLG by inferring data labels from the gradients, but this method is limited to batches where each sample has a unique label. Subsequent works, such as LLG \cite{wainakh2022user}, extend iDLG to handle larger batch sizes, while other methods \cite{ma2023instance, wangtowards}, such as instance-wise reconstruction \cite{ma2023instance}, successfully recover ground-truth labels even in large batches with duplicate labels. IG \cite{geiping2020inverting} introduces Total Variation and a Regularization term to improve the optimization objective further. The study \cite{yin2021see} leverages the mean and variance from batch normalization layers as priors to enhance GLAs. Instead of optimizing dummy inputs directly, GI \cite{jeon2021gradient} proposes optimizing the generator and its input latent to generate dummy images whose gradients match the observed gradients. GGL \cite{li2022auditing} simplifies this by focusing solely on the latent space of pre-trained BigGAN for gradient matching. More recently, GGDM \cite{gu2024federated} uses captured gradients to guide a diffusion model for reconstruction. However, despite these advancements, PGLAs fail under improper model parameter initializations that yield gradients containing minimal data features. Our experiments in Table \ref{compare2} and Figure \ref{visualresults}, for the first time, question the practical effectiveness of these attacks under popular model initialization methods. This exposes a critical limitation in real-world scenarios.

% 主动
\subsection{Active Gradient Leakage Attack (AGLA)}
Based on the manipulation strategies, most AGLAs can be classified into two categories: structure-modified AGLAs and gradient-biased AGLAs.

\textbf{Structure-Modified AGLAs}.
This type of attack mainly enhances PGLAs by assuming a dishonest server that manipulates the model structure \cite{boenisch2023curious, fowl2021robbing, nowak2024qbi, zhao2023loki}. Some studies \cite{fowl2021robbing, nowak2024qbi} insert an FC layer at the beginning of the model, while another work \cite{zhao2023loki} inserts a convolutional layer and two FC layers. These inserted layers are referred to as ``trap weights", which are maliciously modified so that the neurons inside are activated only by samples with specific properties, enabling the reconstruction of the sample with the strongest property. However, modifications to the model structure are inherently detectable due to their explicit changes to the model architecture, rendering them impractical in real-world scenarios. Therefore, this work focuses on another type of AGLAs \cite{zhang2022compromise, wen2022fishing, Garov2024Hiding} which poisons model parameters instead of modifying the model structure. 

\textbf{Gradient-Biased AGLAs}.
Gradient-biased AGLAs \cite{zhang2022compromise, pasquini2022eluding, wen2022fishing, Garov2024Hiding} poison model parameters to skew the gradient space, ensuring that selected samples dominate the batch-averaged gradients while suppressing others. For instance, the attack \cite{zhang2022compromise} zeros out most of the convolutional layers, ensuring that only one sample's features reach the classification layer, activating the relevant neurons.  The method \cite{wen2022fishing} assigns many 0s and 1s to the last FC layer to make the averaged gradient close to the gradient of a single sample, thereby enhancing the reconstruction of PGLAs on a single sample. The method \cite{pasquini2022eluding} distributes inconsistent models to clients, forcing non-target users’ ReLU layers \cite{nair2010rectified} to output zero gradients, thereby retaining only the target user’s gradients, which can then be exploited to leak the targeted private data. Recent work \cite{Garov2024Hiding} observes that all these AGLAs bias the averaged gradient toward the gradients of a small subset of data within a batch while suppressing the gradients of other samples. Exploiting this bias in the biased gradients, research \cite{Garov2024Hiding} introduces a D-SNR detection metric to check poisoned model parameters, which is calculated as below:
\begin{equation}
\label{dsnr}
\resizebox{\linewidth}{!}{$
\begin{aligned}
    D-SNR(\boldsymbol{\theta})= \max_{W\in\boldsymbol{\theta}_{lw}}\frac{\max_{i\in\{1,...,B\}}\left\|\frac{\partial\ell(F(x_i),y_i)}{\partial W}\right\|}{\sum_{i=1}^B\left\|\frac{\partial\ell(F(x_i),y_i)}{\partial W}\right\|-\max_{i\in\{1,...,B\}}\left\|\frac{\partial\ell(F(x_i),y_i)}{\partial W}\right\|},
\end{aligned}
$}
\end{equation}
where ${\theta}_{lw}$ denotes the set of weights of all dense and convolutional layers. D-SNR claims that all prior AGLAs are detectable by principled checks.

\section{Background}
\label{sec:background}
This section provides the necessary background for understanding our work. We first present an overview of the FL paradigm. We then describe the adversarial assumptions and capabilities underlying our threat model. We also discuss the motivation behind our proposed attack. For clarity, the main notations used in this paper are summarized in Table~\ref{tab:notation}.
\begin{table}[t]
\centering
% \vspace{0.0cm}
\caption{Major notations used in this paper.}
% \vspace{-0.2cm}
\label{tab:notation}
\begin{tabular}{ll}
\toprule
Notation & Description \\
\midrule
$x$ & Data batch input into the model \\
$x_i$ & $i$-th sample in a batch $x$ \\
$y$, $y_i$ & Ground-truth label for $x$ or $x_i$ \\
$\hat{y}$ & Model prediction, i.e., $\hat{y}=F(x,\theta)$ \\
$F(\cdot)$ & Neural network model \\
$\theta$ & Model parameters \\
$W$, $W^k$ & Weight matrix and its $k$-th column in an FC layer \\
$\eta$ & Learning rate \\
$\mathcal{D}(\cdot)$ & Gradient-to-input decoder network \\
$\ell(\cdot, \cdot)$ & Loss function (e.g., cross-entropy) \\
$\nabla_\theta \ell$ & Gradient of the loss with respect to $\theta$ \\
$R(\cdot)$ & Gradient leakage reconstruction function \\
$x'$ & Reconstructed input from gradients \\
$B$ & Batch size \\
$C$ & The number of classification classes \\
$k$ & Class index \\
$\theta^*$ & Optimized model parameters maximizing gradient leakage \\
$\phi$ & Parameters of the decoder $\mathcal{D}$ \\
$\lambda_i^k$ & Contribution weight of $x_i$ to class-$k$ neuron gradient \\
$\mathbf{\Lambda}$ & Weight matrix formed by $\lambda_i^k$ over $i$ and $k$ \\
$\bar{x}^{(k)}$ & Weighted average input reconstructed via class-$k$ gradient \\
$b$, $b^k$ & Bias vector and its $k$-th element \\
$\Pi(\cdot)$ & Gradient projection operator \\
$\tilde{g}$ & Projected gradient vector \\
$\rho$ & Projection ratio of gradient dimensionality \\
$L(\theta, \phi)$ & \name training loss for model and decoder \\
$D$ / $D_a$ & Client dataset / Auxiliary dataset available to attacker \\
\bottomrule
\end{tabular}
% \vspace{-0.3cm}
\end{table}

\subsection{Federated Learning}
Federated Learning (FL) is a decentralized learning framework that enables multiple clients to collaboratively train a global model without exchanging their raw data, thereby preserving data privacy~\cite{mcmahan2017communication}. In each communication round $t$, a central server selects a subset of clients $\mathcal{C} = \{c_1, c_2, \dots, c_n\}$ and distributes the current global model $F(\theta_t)$ to them. Each client $c_i$ performs local training on its private dataset $D_i$ by minimizing the empirical loss $\ell(F(\theta_t), D_i)$ and computes the corresponding gradient:
\begin{equation}
    g_i^t = \nabla_{\theta_t} \ell(F(\theta_t), D_i).
\end{equation}
Clients then upload their gradients $\{g_i^t\}_{i=1}^n$ to the server. The server aggregates the gradients (e.g., via weighted averaging) and updates the global model as follows:
\begin{equation}
    \theta_{t+1} = \theta_t - \eta \sum_{i=1}^n w_i \cdot g_i^t, \quad \text{where } w_i = \frac{|D_i|}{\sum_j |D_j|},
\end{equation}
where $\eta$ denotes the learning rate. This process is repeated iteratively until model convergence. Although FL avoids direct access to raw data, recent studies have shown that the shared gradients can still reveal client data. This vulnerability has led to a line of research on GLAs, which aim to reconstruct private data from gradients. Our work builds upon this threat by exploring more general and stealthy attack mechanisms.

\subsection{Threat Model}
Our threat model operates within an FL framework where the server is dishonest and curious. It attempts to infer private client data by poisoning the global model parameters before distribution. However, the server is constrained from modifying the model architecture, as structural changes (e.g., inserting layers) are easily detected by clients via architecture verification, integrity checks, or test queries. Following the setting in many prior works \cite{Garov2024Hiding, yang2022using, 285479}, we assume the malicious client can take some publicly available datasets as the auxiliary dataset, which can be easily obtained from open repositories such as Hugging Face\cite{huggingface}, Kaggle\cite{kaggle}, or OpenML\cite{OpenML}. This is a realistic assumption, as adversaries can readily download such datasets in practical scenarios. Moreover, we experimentally demonstrate that the effectiveness of our attack is not sensitive to the distribution gap between the auxiliary and target datasets, further supporting the rationality of this setting.

\subsection{Motivation}
While PGLAs have been extensively studied and shown to be effective in controlled experimental settings, their success is highly sensitive to the model's parameter initialization. Our empirical observations reveal that models initialized using commonly adopted schemes such as Random, Xavier \cite{glorot2010understanding}, and He \cite{he2015delving} may fail to embed enough data features into the gradients, rendering even SOTA PGLAs incapable of reconstructing the original samples. This strong dependence on the model parameters fundamentally limits PGLAs to a passive role—attackers lack control over the feature distribution in the gradient space, leading to unpredictable and unreliable performance in real-world deployments.

AGLAs have attempted to overcome these limitations by modifying the model architecture or injecting targeted biases into the parameter space. However, such approaches often amplify the gradient of only a subset of samples while suppressing others, resulting in low attack coverage. Moreover, these manipulations typically introduce detectable anomalies in the gradients or model behavior, thereby increasing the risk of being discovered by defensive mechanisms on the client side. This indicates that existing AGLAs are inherently incapable of simultaneously ensuring attack stealth and integrity, due to their fundamental principles.

Inspired by these observations, we propose a novel attack: enhancing the overall information capacity of the gradient space through balanced strengthening. We note that if we can actively guide the model to learn a ``balanced encoding" gradient space during training, ensuring that all samples have similar information expression strengths in the gradient (i.e., enhancing the leakage potential of each sample), this would not only guarantee the reconstruction ability of all samples but also maintain the naturalness of the gradient distribution, reducing the risk of detection. Thus, our method not only breaks free from the traditional passive reliance of PGLAs on waiting for data characteristics to leak but also avoids the incomplete reconstruction coverage and detectability issues caused by the ``intentional bias" in existing AGLAs. This provides a more stable, comprehensive, and stealthy GLA.

% It manipulates the model parameters before distributing them to clients but is restricted from altering the model structure. By exploiting this capability, the server poisons the model parameters to amplify the gradient vulnerabilities, thereby enabling the reconstruction of clients' data from the uploaded gradients. Furthermore, we assume the malicious client has access to publicly available datasets, which can be used as auxiliary data to facilitate the attack.

% \section{Background}
% \label{preliminary}
% \input{preliminary}

\section{Backdoor-theoretical Analysis}
\label{sec:backdoor-analysis}
We introduce a new approach for AGLA analysis. It
% deepens the understanding of MGLAs and 
offers a deep insight into the relationship between model parameters and gradient bias, and explains why existing AGLAs are detectable and cannot recover all samples in a batch. 
% The adversary may also leverage publicly accessible model architectures, such as GANs, to further enhance the effectiveness of data reconstruction.

% \subsection{Formulation of Gradient Leakage in FL}
% We begin with the simplest classification model consisting of FC layers. 
For a simple neural network that is only comprised of fully connected layers $F(x)=xW+b$, where $x\in\mathbb{R}^{B\times m}$ is a batch of data, $W\in\mathbb{R}^{m \times n}$ is the weight parameters. $b\in\mathbb{R}^{1 \times n}$ is the bias, with $B$ being the batch size and $n$ being the number of classification categories. When data $x$ is fed into the model, the output is represented by $\hat{y}=xW+b$. 
As seen in prior work \cite{fowl2021robbing}, the gradients of the weights and biases of the FC layer can be directly used to reconstruct a weighted average of the input data:
\begin{equation}
\label{avgx}
    \bar{x}^{(k)}=\frac{\nabla_{W^k}\ell(F(x,\theta),y)}{\nabla_{b^k}\ell(F(x,\theta),y)}=\sum_{i=1}^B\lambda_i^k\cdot x_i,
\end{equation}
where $k\in[1,n]$ is the class index, $\nabla_{W^k}\ell(F(x,\theta),y)$ (abbreviated as $\nabla W^{k}$) denotes the gradient of the $k^{th}$ column of the weight matrix $W$, and $\nabla_{b^k}\ell(F(x,\theta),y)$ (abbreviated as $\nabla b^{k}$) represents the gradient of the $k^{th}$ element of the bias $b$. 
% The following theorem defines 
$\lambda$ is defined as below.
% and outlines the constraints on the relationships within $\lambda$.

\begin{theorem}
\label{lambda}
    Let $F(x)=xW+b$ be the classification model with one FC layer, where $x$ is the input data, $W$ is the weight matrix, and $b$ is the bias vector, and the corresponding model output is $\hat{y}=F(x)$. Suppose $\ell(\hat{y}, y)$ is the loss function between the model output $\hat{y}$ and the ground-truth labels $y$. For any class index $k\in\{1, 2, ..., C\}$ and sample index $i\in\{1, 2, ..., B\}$, the coefficient $\lambda$ holds that:
    \begin{equation}
    \lambda_i^k=\frac{\frac{\partial \ell(\hat{y}_i^k,y_i)}{\partial\hat{y}_i^k}}{\sum_{j=1}^B\frac{\partial \ell(\hat{y}_j^k,y_j)}{\partial\hat{y}_j^k}}, and \sum_{i=1}^B\lambda_i^k=1,
    \end{equation}
    where ${\partial \ell(\hat{y}_i^k,y_i)}/{\partial\hat{y}_i^k}$ denotes the partial derivative of the loss function with respect to the output $\hat{y}_i^k$.
\end{theorem}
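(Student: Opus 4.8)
The plan is to compute $\nabla_{W^k}\ell$ and $\nabla_{b^k}\ell$ in closed form via the chain rule, substitute them into Eq.~\eqref{avgx}, and simply read off the coefficient. The one modeling assumption I would state at the outset is that the batch loss decomposes additively over samples, $\ell(\hat y,y)=\sum_{i=1}^B \ell(\hat y_i,y_i)$; an averaged loss differs only by a global factor $1/B$ that cancels in the ratio defining $\bar x^{(k)}$. The linearity of the FC layer is what makes the rest routine: for sample $i$ and class $k$ the logit is the scalar $\hat y_i^k=\langle x_i,W^k\rangle+b^k$, so $\ell$ depends on the column $W^k$ and the bias entry $b^k$ only through the collection of scalars $\{\hat y_i^k\}_{i=1}^B$.

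First I would differentiate with respect to the column $W^k\in\mathbb{R}^m$. Since $\partial \hat y_i^k/\partial W^k=x_i$ and only the $k$-th logit of sample $i$ involves $W^k$, the chain rule gives the vector identity $\nabla_{W^k}\ell=\sum_{i=1}^B \frac{\partial \ell(\hat y_i^k,y_i)}{\partial \hat y_i^k}\,x_i$. Next I would differentiate with respect to the scalar $b^k$; here $\partial \hat y_i^k/\partial b^k=1$, so the same argument collapses to $\nabla_{b^k}\ell=\sum_{j=1}^B \frac{\partial \ell(\hat y_j^k,y_j)}{\partial \hat y_j^k}$, a scalar.

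Substituting both into Eq.~\eqref{avgx}, the division is a scalar rescaling of the vector $\nabla_{W^k}\ell$ by the reciprocal of $\nabla_{b^k}\ell$, yielding $\bar x^{(k)}=\sum_{i=1}^B \lambda_i^k\, x_i$ with $\lambda_i^k$ exactly the stated ratio of the per-sample logit derivative to the batch sum of logit derivatives. The normalization $\sum_{i=1}^B\lambda_i^k=1$ is then immediate, since summing the numerators over $i$ reproduces the denominator.

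I do not anticipate a substantive obstacle, as the statement is in essence careful bookkeeping of the chain rule. The two points deserving attention are (i) keeping the types straight---$\nabla_{W^k}\ell$ is an $m$-vector whereas $\nabla_{b^k}\ell$ is a scalar, so the quotient in Eq.~\eqref{avgx} must be read as dividing every coordinate of $\nabla_{W^k}\ell$ by the common scalar---and (ii) the additivity of the batch loss, which is precisely what prevents cross-sample contributions from entering either gradient. Both are standard and verified by inspection.
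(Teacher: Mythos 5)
Your proposal is correct and follows essentially the same route as the paper's own proof: apply the chain rule to obtain $\nabla_{W^k}\ell=\frac{1}{B}\sum_i \frac{\partial \ell(\hat y_i^k,y_i)}{\partial \hat y_i^k}x_i$ and $\nabla_{b^k}\ell=\frac{1}{B}\sum_j \frac{\partial \ell(\hat y_j^k,y_j)}{\partial \hat y_j^k}$, form the ratio, and read off $\lambda_i^k$, with the normalization $\sum_i\lambda_i^k=1$ immediate. Your remark that the $1/B$ (or additive-sum) convention cancels in the quotient, and your care about the vector-divided-by-scalar reading of Eq.~\eqref{avgx}, match the paper's computation exactly.
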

\begin{proof}
    Consider a batch of $B$ samples $\{(x_i,y_i)\}_{i=1}^B$, where $x\in\mathbb{R}^{B\times\text{Channel}\times\text{Height}\times\text{Width}}$ represents the input data and $y_i$ are the corresponding ground-truth labels. The model outputs for each sample within a batch are given by: $\hat{y}_i=x_iW+b\in\mathbb{R}^C$, where $C$ denotes the number of classification classes. The total loss over the batch is $\frac1B\sum_{i=1}^B\ell(\hat{y}_i,y_i)$. Next, we derive the gradients of weights and biases for the $k^{th}$ class and express the ratio $\frac{\nabla W^k}{\nabla b^k}$ in terms of $\lambda_{i}^{k}$ and $x_i$. The gradient of the weight matrix $W$ with respect to the loss for class index $k$ is given by the average of the gradients over all samples:
    \begin{equation}
    \nabla W^k=\frac1B\sum_{i=1}^B\nabla W_i^k.
    \end{equation}
    According to the chain rule, we can further obtain:
    \begin{equation}
    \begin{aligned}
    \nabla W^{k}& =\frac1B\sum_{i=1}^B\nabla W_i^k=\frac1B\sum_{i=1}^B\frac{\partial l(\hat{y}_i^k,y_i)}{\partial\hat{y}_i^k}\cdot\frac{\partial\hat{y}_i^k}{\partial W_i^k} \\
    &=\frac1B\sum_{i=1}^B\frac{\partial l(\hat{y}_i^k,y_i)}{\partial\hat{y}_i^k}\cdot x_i.
    \end{aligned}
    \end{equation}
    Similarly, the gradient of the bias corresponding to the $k^{th}$ category index can be derived as:
    \begin{equation}
    \begin{gathered}
    \nabla b^{k}=\frac1B\sum_{i=1}^B\nabla b_i^k=\frac1B\sum_{i=1}^B\frac{\partial l(\hat{y}_i^k,y_i)}{\partial\hat{y}_i^k}\cdot\frac{\partial\hat{y}_i^k}{\partial b^k} \\
    =\frac1B\sum_{i=1}^B\frac{\partial l(\hat{y}_i^k,y_i)}{\partial\hat{y}_i^k}\cdot1
    =\frac1B\sum_{i=1}^B\frac{\partial l(\hat{y}_i^k,y_i)}{\partial\hat{y}_i^k}.
    \end{gathered}
    \end{equation}
    Therefore, $\nabla W^k/\nabla b^k$ can be derived as:
    \begin{equation}
    \resizebox{\linewidth}{!}{$
    \begin{aligned}
    \frac{\nabla W^k}{\nabla b^k}& =\frac{\frac1B\sum_{i=1}^B\frac{\partial l(\hat{y}_i^k,y_i)}{\partial\hat{y}_i^k}\cdot x_i}{\frac1B\sum_{i=1}^B\frac{\partial l(\hat{y}_i^k,y_i)}{\partial\hat{y}_i^k}}=\frac{\sum_{i=1}^B\frac{\partial l(\hat{y}_i^k,y_i)}{\partial\hat{y}_i^k}\cdot x_i}{\sum_{i=1}^B\frac{\partial l(\hat{y}_i^k,y_i)}{\partial\hat{y}_i^k}} \\
    &=\frac{\frac{\partial l(\hat{y}_{1}^{k},y_{1})}{\partial\hat{y}_{1}^{k}}\cdot x_{1}}{\sum_{i=1}^{B}\frac{\partial l(\hat{y}_{i}^{k},y_{i})}{\partial\hat{y}_{i}^{k}}}+\frac{\frac{\partial l(\hat{y}_{2}^{k},y_{2})}{\partial\hat{y}_{2}^{k}}\cdot x_{2}}{\sum_{i=1}^{B}\frac{\partial l(\hat{y}_{i}^{k},y_{i})}{\partial\hat{y}_{i}^{k}}}+\cdots+\frac{\frac{\partial l(\hat{y}_{B}^{k},y_{B})}{\partial\hat{y}_{B}^{k}}\cdot x_{B}}{\sum_{i=1}^{B}\frac{\partial l(\hat{y}_{i}^{k},y_{i})}{\partial\hat{y}_{i}^{k}}} \\
    &=\sum_{i=1}^{B}\frac{\frac{\partial l(\hat{y}_{i}^{k},y_{i})}{\partial\hat{y}_{i}^{k}}}{\sum_{j=1}^{B}\frac{\partial l(\hat{y}_{j}^{k},y_{j})}{\partial\hat{y}_{j}^{k}}}\cdot x_{i}.
    \end{aligned}
    $}
    \end{equation}
    This expression can be rewritten as:
    \begin{equation}
    \frac{\nabla W^k}{\nabla b^k}=\sum_{i=1}^{B}\frac{\frac{\partial l(\hat{y}_{i}^{k},y_{i})}{\partial\hat{y}_{i}^{k}}}{\sum_{j=1}^{B}\frac{\partial l(\hat{y}_{j}^{k},y_{j})}{\partial\hat{y}_{j}^{k}}}\cdot x_{i}=\sum_{i=1}^B\lambda_i^k\cdot x_i.
    \end{equation}
    Therefore, $\lambda_i^k={\frac{\partial \ell(\hat{y}_i^k,y_i)}{\partial\hat{y}_i^k}}/{\sum_{j=1}^B\frac{\partial \ell(\hat{y}_j^k,y_j)}{\partial\hat{y}_j^k}}, and \sum_{i=1}^B\lambda_i^k=1.$
\end{proof}
% \begin{proof}
% See our proof in Appendix \ref{proof1}.
% \end{proof}

Taking a binary classification network with an input of 4 samples as an example, the weighted average sample obtained by the gradient of the weights and biases of the two categories can be expressed as:
\begin{equation}
\label{lambdaexample}
    \begin{bmatrix}\frac{\nabla W^1}{\nabla b^1}\\\frac{\nabla W^2}{\nabla b^2}\end{bmatrix}=\begin{bmatrix}\bar{x}^{(1)}\\\bar{x}^{(2)}\end{bmatrix}=\begin{bmatrix}\lambda_1^1&\lambda_2^1&\lambda_3^1&\lambda_4^1\\\lambda_1^2&\lambda_2^2&\lambda_3^2&\lambda_4^2\end{bmatrix}\begin{bmatrix}x_1\\x_2\\x_3\\x_4\end{bmatrix}=\mathbf{\Lambda} \mathbf{X}.
\end{equation}
Equation (\ref{lambdaexample}) shows the weighted average data resolved by the gradient of the weights and biases $w.r.t$ a given class, which is actually a weighted summation of the features of the input layer. $\lambda$ is exactly the weight factor to quantify the bias in the neurons' gradient space toward specific samples within a batch.

$\lambda$ is crucial for the gradient-biased AGLAs, as it controls the weighted feature proportions computed from gradients. It is the first technical measure to quantify the relative contribution of each sample in the batch to the activation of each class of neurons in the model. Interestingly, we find the AGLAs are very analogous to poisoning-based backdoor attacks. In the later, the attacker poisons the model training process to an expected state to manipulate the model output and eventually control the distribution of $\lambda$. Therefore, we call this approach the backdoor-theoretical perspective.
% for manipulating gradient space bias: control model parameters → control model outputs → control $\lambda$ → control the weighted feature proportions computed from gradients. 

% model output is also called a backdoor attack, so we call this perspective the backdoor-theoretical perspective. 
% This framework suggests that many backdoor attacks could potentially be adapted for gradient leakage attacks. 
% All existing gradient-biased AGLAs operate through this pipeline: while they do not directly modify the model's output, they indirectly influence the output to control $\lambda$, thereby biasing the gradient space. However, prior work failed to formalize this bias quantitatively.
The core mechanism of manipulating $\lambda$ inherently results in several challenges: (1) samples without gradient space bias cannot be reconstructed; (2) reconstruction becomes impossible when two samples with the required properties coexist; and (3) the presence of anomalous gradients in the gradient space makes the attack detectable. This explains the fundamental limitations of existing AGLAs. 

This is the first theoretical analysis for AGLAs, which systematically reveals the underlying mechanism behind gradient bias via the backdoor-theoretical lens. It not only bridges a critical gap in the current literature but also provides principled insights into why existing AGLAs are inherently incomplete in reconstruction and detectable due to gradient anomalies.

\section{Enhanced Gradient Global Vulnerability}
\label{sec:EGGV}
% \begin{figure}[t]
%     \centering
%     \vspace{0.0cm}  %调整图片与上文的垂直距离
%     \setlength{\abovecaptionskip}{0.0cm}   %调整图片标题与图距离
%     \setlength{\belowcaptionskip}{-0.6cm}   %调整图片标题与下文距离
%     \includegraphics[scale=0.12]{}
%     \caption{General overview of the proposed method FHT.}
%     \label{overview2}
% \end{figure}
% In this section, we introduce a robust detection metric for existing MGLAs and the proposed model poisoning methods for improving attack stealthiness and coverage. We present iD-SNR, a more sensitive detection metric for poisoned parameters. Next, we introduce EGGV, which increases the vulnerability of the gradient space across an entire data batch, thus enhancing attack stealthiness and coverage.
The above-mentioned challenges suggest that instead of controlling $\lambda$ to achieve reconstruction, we should focus on increasing the concentration of input features and enhancing feature representation at the source rather than compressing the features of some samples to amplify others. Following this inspiration, we introduce \name, a new attack that poisons the model parameters $\theta$ to equally enhance the leakage potential of all samples in a batch, thus ensuring a comprehensive attack and evading detection.
% , as existing detection metrics rely on gradient bias
Figure \ref{showchallenges} provides an intuitive comparison between \name and existing Gradient-biased AGLAs. While existing attacks achieve reconstruction by suppressing the features of non-target samples to amplify those of specific ones in the gradient, \name instead uniformly enhances the encoded features of each sample in the gradient, following an entirely different principle.
\begin{figure}[t]
    \centering
    \vspace{0.0cm}  %调整图片与上文的垂直距离
    \includegraphics[scale=0.1]{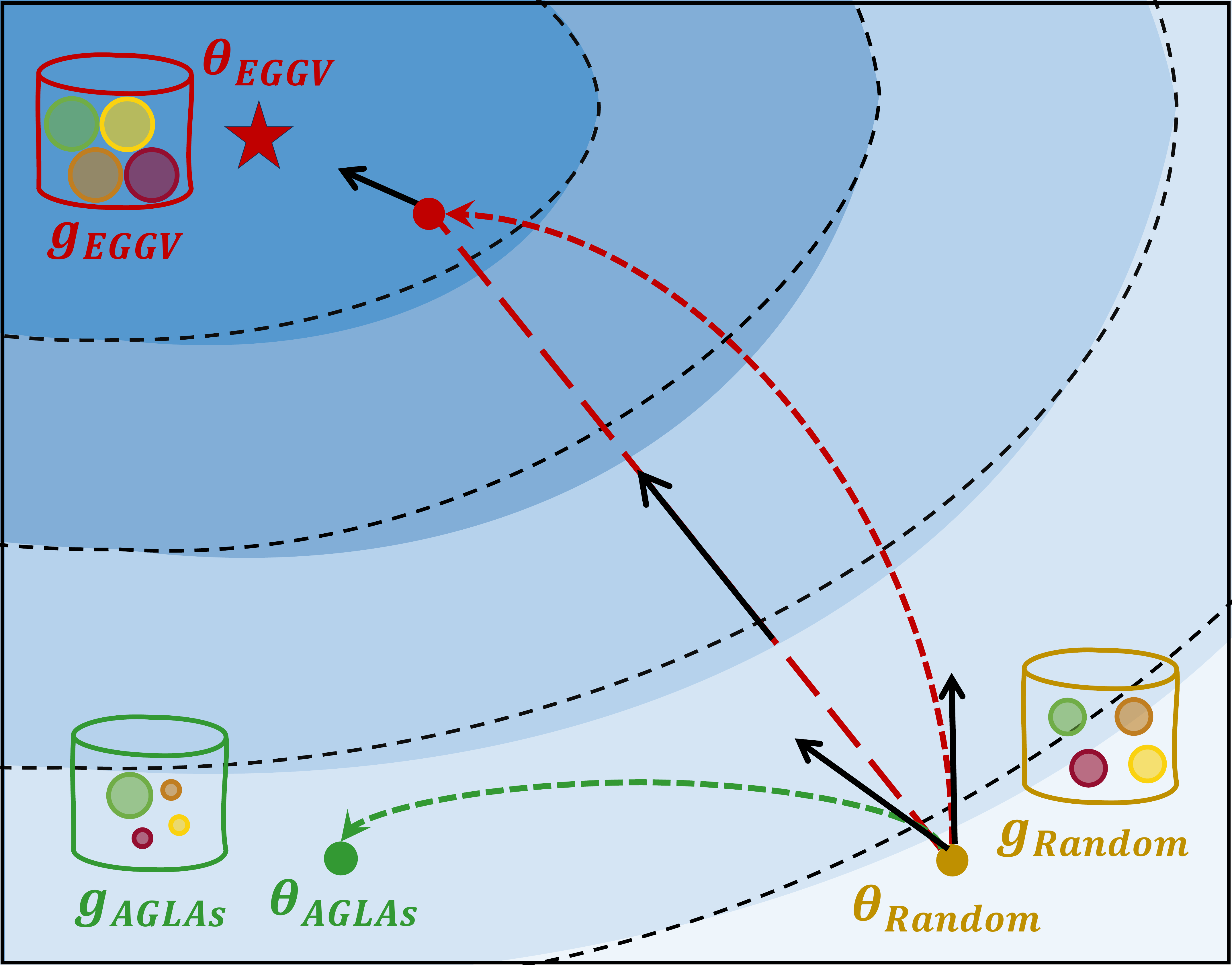}
    % \vspace{-0.1cm}  %调整图片与上文的垂直距离
    \caption{Fundamental principle comparison between \name and gradient-biased AGLAs.}
    % \vspace{-0.2cm}  %调整图片与上文的垂直距离
    \label{showchallenges}
\end{figure}
\begin{figure*}[t]
    \centering
    \vspace{0.0cm}  %调整图片与上文的垂直距离
    \includegraphics[scale=0.23]{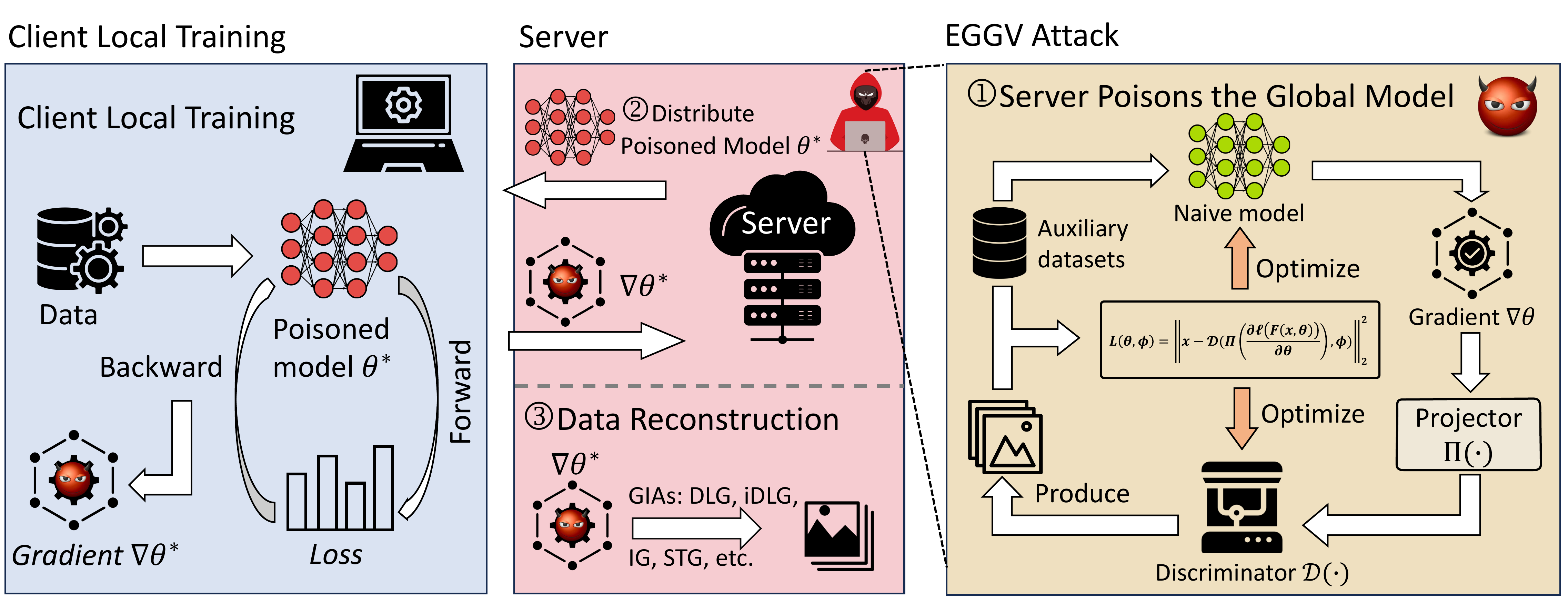}
    % \vspace{-0.1cm}  %调整图片与上文的垂直距离
    \caption{Overview of the proposed \name, consisting of three steps: \ding{172} poison the model parameters to make its gradient space vulnerable; \ding{173} distribute poisoned model and gather vulnerable gradients; \ding{174} implement existing GLAs on these gradients.}
    % \vspace{-0.1cm}  %调整图片与上文的垂直距离
    \label{EGGVoverview}
\end{figure*}
\begin{algorithm}
\caption{Poisoning and Reconstruction of \name}
\label{algorithm1}
\begin{algorithmic}[1]
\State \textbf{Input:} Auxiliary dataset $D_a$, global model $F(\theta)$, acceptable error $\epsilon$, number of iterations $N$
\State \textbf{Output:} Reconstructed data $x^{\prime}$

\State \textbf{Main Process:}
\State $F(\theta^*) \gets$ \texttt{PoisonModel}($D_a, F(\theta), N$)
\State $\nabla \theta^* \gets$ \texttt{CollectClientGradients}($F(\theta^*)$)
\State $x^{\prime} \gets$ \texttt{ReconstructData}($\nabla \theta^*$)
\State \Return $x^{\prime}$

\Function{PoisonModel}{$D_a, F(\theta), N$}
\State Initialize $\theta_{0}$ randomly
\State $t \gets 0$
\While{$L(\theta_{t}, \phi_{t}) < \epsilon$}
    \For{each $(x_j, y_j) \in D_a$}
        \State $L(\theta_{t}, \phi_{t}) \gets \left\|x_j - \mathcal{D}\left(\Pi\left(\frac{\partial \ell(F(x_j, \theta_{t}), y_j)}{\partial \theta_{t}}\right), \phi_{t}\right) \right\|_2^2$
        % \State \resizebox{.95\linewidth}{!}{$
        % L(\theta_t, \phi_t) \gets \left\|x_j - \mathcal{D}\left(\Pi\left(\nabla_{\theta_t} \ell(F(x_j, \theta_t), y_j)\right), \phi_t\right)\right\|_2^2
        % $}
        \State Update $\theta_{t}, \phi_{t}$ using gradient descent:
        \State \quad $\theta_{t+1} \gets \theta_{t} - \alpha_1 \nabla_{\theta_{t}} L(\theta_{t}, \phi_{t})$
        \State \quad $\phi_{t+1} \gets \phi_{t} - \alpha_2 \nabla_{\phi_{t}} L(\theta_{t}, \phi_{t})$
        \State $t \gets t + 1$
    \EndFor
\EndWhile
\State \Return $F(\theta^*)$ with updated $\theta$
\EndFunction

\Function{CollectClientGradients}{$F(\theta^*)$}
    \State Client $i$ receives the global model $F(\theta^*)$
    \State Client $i$ calculates the gradient $\nabla_{\theta^*} \ell(F(x, \theta^*), y)$ using its data $(x, y)$
    \State \Return $\nabla_{\theta^*} \ell(F(x,\theta^*), y)$
\EndFunction

\Function{ReconstructData}{$\nabla \theta^*$}
    \State Select any prior PGLA methods $R(\cdot)$
    \State Reconstruct client data $x^{\prime}$ through $R(\nabla \theta^*)$
    \State \Return reconstructed data $x^{\prime}$
\EndFunction

\end{algorithmic}
\end{algorithm}
\subsection{Attack Overview}
The proposed \name is comprised of three main steps. Algorithm \ref{algorithm1} shows the detailed process. Figure \ref{EGGVoverview}  provides a visual overview of \name.
% \ding{172} model parameter poisoning, \ding{173} global model distribution, and \ding{174} data reconstruction. 

\textbf{Step I:} The malicious server poisons the global model before its distribution (Step \ding{172} in Figure \ref{EGGVoverview}, \texttt{PoisonModel} in Algorithm \ref{algorithm1}). In this stage, the server iteratively optimizes the global model and discriminator locally with the objective function $L(\theta, \phi)$ using an auxiliary dataset, enriching the gradients with encoded features.

\textbf{Step II:} The server distributes the poisoned model to clients (Step \ding{173} in Figure \ref{EGGVoverview}, \texttt{CollectClientGradients} in Algorithm \ref{algorithm1}). Each client, following the FL protocol, feeds its own training data into the poisoned model to generate gradients that are then uploaded back to the server.

\textbf{Step III:} The server uses these uploaded gradients to perform data reconstruction using any existing PGLAs (Step \ding{174} in Figure \ref{EGGVoverview}, \texttt{ReconstructData} in Algorithm \ref{algorithm1}).
% As shown in Figure \ref{EGGVoverview}, the process begins with server poisoning the global model (Step \ding{173} in Figure \ref{EGGVoverview}, \texttt{CollectClientGradients} function in Algorithm \ref{algorithm1}).

% In the model poisoning stage, the attacker iteratively optimizes the global model and discriminator locally with the objective function $L(\theta, \phi)$ using an auxiliary dataset. This enables the model to encode richer features into its gradients. In the data reconstruction stage, the server distributes the poisoned global model to the victim client. The victim client, following the federated learning protocol, feeds its own training data into the global model to generate gradients that are then uploaded back to the server. The server then uses these gradients to perform data reconstruction through any existing HGLAs. Algorithm \ref{algorithm1} presents the detailed steps of the EGGV. Figure \ref{EGGVoverview}  provides a visual overview of EGGV.
\subsection{Problem Formulation}
%观察到D-SNR依赖于梯度偏向进行检测后，这激发了我们
% Recent work \cite{du2024sok} shows that model parameters play a crucial role in the success of GLAs. In particular, the training stage of the model (e.g., trained or untrained) is commonly identified as the most critical factor for the success of GLAs. Inspired by this, we propose that an adversary can pre-poison the model parameters to increase data vulnerability in the gradient space, thereby gaining control over the gradient leakage potential. 
The objective of the malicious server is to minimize the difference between the reconstructed data and the original data. Formally, we have the following objective:
\begin{equation}
    \theta^*=\underset{\theta}{\operatorname*{argmin}}\|x-x^{\prime}\|_p,
\end{equation}
where $x$ represents one data batch from the auxiliary dataset $D$, and $x^{\prime}$ denotes the reconstructed data obtained by the attacker using the gradient leakage method $R(\cdot)$. These gradients are computed on the client side during local training. 
Specifically, the client inputs local training data $x$ into the model, yielding the output $\hat{y}=F(x,\theta)$, 
% . The client then computes the loss $\ell(F(x,\theta),y)$ between the model outputs and the ground-truth labels $y$, and subsequently 
and then calculates the gradient $\frac{\partial \ell(F(x,\theta),y)}{\partial\theta}$. The optimization objective can therefore be expanded as follows:
\begin{equation}
\label{poisonform}
    \theta^*=\arg\min_\theta\|x-R\left(\frac{\partial \ell(F(x,\theta),y)}{\partial\theta}\right)\|_p.
\end{equation}
The server aims to optimize the model parameters $\theta$ such that the reconstructed data obtained via the gradient leakage method $R(\cdot)$, closely approaches the original input data $x$.

\subsection{Detailed Solution}
As shown in Equation~\eqref{poisonform}, the performance of the gradient leakage attack depends on the model parameters, meaning there exists an optimal set of model parameters, denoted as $\theta^\ast$, that minimizes the reconstruction loss between the original data $x$ and the recovered data $x'$ given a specific reconstruction function $R(\cdot)$. 

To search in the continuous parameter space for the optimal model parameters, 
% that maximize the vulnerability of gradient space to data leakage. Given the high dimensions of gradients and the varying dimensions across model layers, 
we introduce a dimension reduction projector $\prod(\cdot)$. 
% This projector performs sparse sampling at fixed positions for the gradients of each layer, ensuring that the gradients are selected from consistent positions during the iterations. 
% The sampled gradients are then concatenated into a lower-dimensional vector. 
Specifically, for a data batch $x$, the gradients $g=\nabla_\theta\ell(F(x,\theta),y)$ on the global model are sampled by the projector $\prod(\cdot)$ at fixed positions, ensuring that the gradients have consistent positions during the iterations:
\begin{equation}
    \Pi(g)=(g_{1}[p_1],...,g_{L}[p_L], \rho)^T,
\end{equation}
where $p_1$, $\ldots$, $p_L$ represent pre-specified sets of gradient positions from the $1^{st}$ to the $L^{th}$ layer, indicating the fixed positions of the gradients sampled during each iteration. $\rho$ represents the ratio of the number of parameters of the projected gradient to that of the original gradient, and we hereinafter refer to it as the projection ratio. By applying the projector $\Pi(\cdot)$, we map the high-dimensional gradient space into a lower-dimensional vector space as follows:
\begin{equation}
    \tilde{g}=\Pi\left(\frac{\partial \ell(F(x,\theta),y)}{\partial\theta}\right).
\end{equation}
With the projected gradient, we introduce a discriminator $\mathcal{D}(\cdot)$ to evaluate the potential for gradient leakage by the projected gradient. 
Equation~\eqref{poisonform} measures the vulnerability of the corresponding gradient space by performing an end-to-end reconstruction attack to compute the similarity between the reconstructed and original data. 
Traditional end-to-end reconstruction attacks require performing gradient matching in Equation (\ref{dlg}) and then optimizing on dummy data. This method has a high computational cost, as it requires iterative operations in the continuous gradient space for each update of $\theta$. Additionally, iterative reconstruction introduces a nested optimization structure, making it challenging to compute the second-order derivative of the loss.

% To overcome this, we propose an approximate method that evaluates the potential for gradient leakage by decoding the projected gradient. 
Prior GLAs suggest that the gradient vulnerability stems from the data features they encode. The more data features a gradient contains, the more vulnerable that gradient becomes. To this end, \textit{the gradients can be encoded representations of input data}, while the forward and backward propagation within the model is the encoding process.

The goal of the proposed \name is to refine this encoding process to maximize the retention of input data features within the gradients. 
To achieve this, we design a decoder that decodes the projected gradients to the original input. 
The attacker jointly optimizes the model and the decoder by the following loss function:
\begin{equation} \label{eq:decoder}
    L(\theta, \phi) = \left\|x - \mathcal{D}\left(\Pi\left(\frac{\partial \ell(F(x, \theta), y)}{\partial \theta}\right), \phi\right) \right\|_2^2,
\end{equation}
where $\phi$ represents the parameters of the decoder $\mathcal{D}$.

\subsection{Optimal Model for the Gradient Leakage}
We prove a global minimum exists for the proposed loss function, where the corresponding model parameters maximize the vulnerability of gradient space to data leakage.
\begin{assumption}
\label{assumption1}
The global model $F(\theta)$ is continuous, and the parameter space $\Theta$ of $\theta$ is a non-empty compact set.
\end{assumption}
\begin{assumption}
\label{assumption2}
The loss function $\ell(\cdot, \cdot)$ for the client training is continuously differentiable with respect to the model parameters $\theta$, allowing for gradient computation with respect to $\theta$.
\end{assumption}
\begin{assumption}
\label{assumption3}
The decoder $\mathcal{D}$ is a linear function of the form $\mathcal{D}(\tilde{g})=W\cdot\tilde{g}+b$ and is continuous. The parameter space $\Phi$ of $\phi$ is a non-empty compact set.
\end{assumption}
\begin{theorem}
\label{theorem1}
Under the above assumptions, there exists parameters $\theta^*\in\Theta$, $\phi^*\in\Phi$ such that the loss function $L(\theta, \phi)$ defined in Equation~\eqref{eq:decoder} attains its global minimum:
\begin{equation}
    \theta^*, \phi^*=\arg\min_{\theta\in\Theta, \phi\in\Phi}L(\theta, \phi).
\end{equation}
\end{theorem}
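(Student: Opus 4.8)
The plan is to invoke the Weierstrass extreme value theorem: a real-valued continuous function on a non-empty compact set attains its infimum. The domain here is the product space $\Theta \times \Phi$, so the first thing I would record is that $\Theta \times \Phi$ is non-empty and compact. This follows immediately because Assumptions~\ref{assumption1} and \ref{assumption3} each guarantee a non-empty compact factor, and a finite product of compact sets is compact (by Tychonoff, or simply Heine--Borel in finite dimensions). With the domain settled, the only substantive work left is to show that the map $(\theta, \phi) \mapsto L(\theta, \phi)$ defined in Equation~\eqref{eq:decoder} is continuous on this product.

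To establish continuity I would decompose $L$ into a composition of simpler maps and verify continuity of each, then appeal to the fact that compositions and products of continuous maps are continuous. First, the gradient map $\theta \mapsto \nabla_\theta \ell(F(x,\theta), y)$ is continuous, which is exactly what Assumption~\ref{assumption2} provides: continuous differentiability of $\ell$ with respect to $\theta$ means this gradient exists and varies continuously with $\theta$. Second, the projector $\Pi(\cdot)$ merely selects a fixed coordinate set $p_1, \ldots, p_L$ and appends the constant $\rho$; coordinate selection is linear and concatenation with a constant is continuous, so $\theta \mapsto \tilde{g} = \Pi(\nabla_\theta \ell)$ is continuous. Third, by Assumption~\ref{assumption3} the decoder is the affine map $\mathcal{D}(\tilde{g}, \phi) = W\tilde{g} + b$ with $\phi = (W, b)$, and since matrix--vector multiplication and vector addition are jointly continuous in their arguments, $(\tilde{g}, \phi) \mapsto \mathcal{D}(\tilde{g}, \phi)$ is continuous. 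Fourth, the residual map $z \mapsto \|x - z\|_2^2$ is continuous as the composition of an affine shift with the squared Euclidean norm.

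Composing these four continuous maps and using continuity of products of continuous functions, I would conclude that $(\theta, \phi) \mapsto L(\theta, \phi)$ is a continuous function from $\Theta \times \Phi$ into $\mathbb{R}$. Applying Weierstrass's theorem to this continuous function on the non-empty compact set $\Theta \times \Phi$ then yields a minimizer $(\theta^*, \phi^*)$, which is precisely the assertion of the theorem. Since the decoder is affine and the projector is linear, I would not need any convexity or coercivity argument; compactness alone carries the existence claim, and I would keep the topological steps brief.

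The hard part will be the continuity of the gradient map $\theta \mapsto \nabla_\theta \ell(F(x,\theta), y)$, because Assumption~\ref{assumption1} only asserts continuity of $F$, not its differentiability, so it is not a priori clear that differentiating $\ell$ through $F$ yields a \emph{continuous} gradient field. The subtlety is resolved by reading Assumption~\ref{assumption2} as a hypothesis on the composite map $\theta \mapsto \ell(F(x,\theta), y)$ being $C^1$ in $\theta$, which directly supplies continuity of the gradient without requiring $F$ itself to be smooth. In the write-up I would therefore be careful to state explicitly which assumption licenses each continuity claim, since the whole argument for \name hinges on that single point while the rest is routine topology.
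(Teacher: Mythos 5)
Your proposal is correct and follows essentially the same route as the paper's own proof: establish continuity of $L(\theta,\phi)$ by composing the continuous gradient map (Assumption~\ref{assumption2}), the linear fixed-position projector, the continuous affine decoder (Assumption~\ref{assumption3}), and the squared norm, then apply the Weierstrass extreme value theorem on the non-empty compact set $\Theta\times\Phi$ (Assumptions~\ref{assumption1} and~\ref{assumption3}). Your explicit remark on reading Assumption~\ref{assumption2} as a $C^1$ hypothesis on the composite $\theta\mapsto\ell(F(x,\theta),y)$ is a slightly more careful articulation of a point the paper treats implicitly, but the argument is the same.
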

% \begin{proof}
At $\theta=\theta^{*}, \phi=\phi^{*}$, the gradient $\nabla_\theta \ell(F(x,\theta),y)$ encodes the maximum amount of feature from the input data $x$, making the gradient space most susceptible to leakage. 
\begin{proof}
\textbf{Continuity of $L(\theta, \phi)$.}
From Assumption (\ref{assumption2}), since $\ell(\cdot,\cdot)$ is continuously differentiable with respect to $\theta$, the gradient $\nabla_\theta \ell(F(x,\theta),y)$ is continuous with respect to $\theta$. The projector $\Pi$ is a fixed-position sparse sampling linear operator, so the composite function $\Pi(\nabla_\theta l(F(x,\theta),y))$ is also continuous with respect to $\theta$. By Assumption \ref{assumption3}, the decoder $\mathcal{D}$ is continuous. Therefore, the composite function $\mathcal{D}(\Pi(\nabla_\theta l(F(x,\theta),y)))$ is continuous with respect to $\theta$ and $\phi$. The squared Euclidean norm $\|\cdot\|_2^2$ is a continuous. Hence, the loss function $L(\theta, \phi)$ is continuous with respect to $\theta$ and $\phi$.

\textbf{Existence of Global Minimum.}
From Assumption (\ref{assumption1}) and Assumption (\ref{assumption3}), the parameter space $\Theta$ and $\Phi$ is a non-empty compact set. By the Weierstrass Extreme Value Theorem \cite{rudin1976principles, bartle2011introduction}, any continuous function on a compact set attains its maximum and minimum values. Therefore, there exists ${\theta}^* \in \Theta$ and ${\phi}^* \in \Phi$ such that $\theta^*, {\phi}^*=\arg\min_{\theta\in\Theta, \phi\in\Phi}L(\theta, \phi)$.

\textbf{Maximum Vulnerability of the Gradient Space.}
At $\theta = \theta^{*}$ and $\phi = \phi^{*}$, the loss function $L(\theta, \phi)$ attains its global minimum, which indicates that the reconstruction error is minimized. This indicates that the encoding and decoding processes of the gradient have reached an optimal state. If a better decoder or gradient construction existed, it would further reduce $L(\theta, \phi)$, contradicting the minimality of $L(\theta^{*}, \phi^{*})$. Therefore, at $\theta = \theta^{*}$ and $\phi = \phi^{*}$, the risk of data leakage from the gradient space to the input data $x$ is maximized. This means the gradient $\nabla_\theta \ell(F(x,\theta^*),y)$ contains the most features of $x$, rendering the gradient space most vulnerable.
\end{proof}

\subsection{Theoretical Guarantee of Stealthiness}
\label{sec:stealthy-proof}

In the previous subsection, we established the existence of optimal poisoned parameters $(\theta^*, \phi^*)$ minimizing the poisoning loss in Equation~\eqref{eq:decoder}. Here we theoretically demonstrate that gradients derived from $\theta^*$ exhibit strong stealthiness, undermining detection methods such as D-SNR~\cite{Garov2024Hiding}.

Stealthiness can be quantified by the variance of gradient norms within a batch. Let $g_i$ be the gradient norm of the $i$-th sample using optimal parameters $\theta^*$:
\begin{equation}
g_i = \left\|\frac{\partial \ell(F(x_i; \theta^*), y_i)}{\partial \theta^*}\right\|_2.
\end{equation}

We define stealthiness as minimizing the variance of $g_i$:
\begin{equation}
\mathrm{Var}(g_i) = \frac{1}{B}\sum_{i=1}^{B}(g_i - \mu_g)^2, \quad \text{with} \quad \mu_g = \frac{1}{B}\sum_{i=1}^{B} g_i.
\end{equation}

\begin{theorem}[Gradient Uniformity and Stealthiness]
Under Assumptions~\ref{assumption1}--\ref{assumption3}, the optimal poisoned parameters $(\theta^*, \phi^*)$ ensure that the gradient variance is bounded by a small constant $\epsilon$, leading to bounded D-SNR indistinguishable from naturally trained gradients:
\begin{equation}
\mathrm{Var}(g_i) \leq \epsilon, \quad D-SNR(\theta^*) \leq \gamma\cdot\epsilon,
\end{equation}
where $\gamma>0$ is a constant dependent on the batch size and model structure.
\end{theorem}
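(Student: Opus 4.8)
The plan is to reduce the two-part claim to a single substantive estimate. First I would rewrite the D-SNR functional of Equation~\eqref{dsnr}, restricted to one weight block $W$, purely in terms of the per-sample gradient norms: it equals $\max_i g_i/(\sum_i g_i-\max_i g_i)$, and since $\sum_i g_i=B\mu_g$ this is $\max_i g_i/(B\mu_g-\max_i g_i)$, a strictly increasing function of $\max_i g_i$. Thus both halves of the theorem are controlled by how far $\max_i g_i$ can exceed the mean $\mu_g$, which is exactly what $\mathrm{Var}(g_i)$ measures. The real work is therefore the variance bound $\mathrm{Var}(g_i)\le\epsilon$; the D-SNR bound drops out by an elementary deviation inequality once the variance is controlled.

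To prove $\mathrm{Var}(g_i)\le\epsilon$ I would couple the gradient-norm uniformity to the reconstruction loss $L(\theta,\phi)$ of Equation~\eqref{eq:decoder}. Assumption~\ref{assumption3} makes the decoder linear, $\mathcal{D}(\tilde g)=W\tilde g+b$, so each reconstructed sample is an affine image of its projected gradient contribution. The key step is a structural lower bound $L(\theta,\phi)\ge c\cdot\mathrm{Var}(g_i)$ for some constant $c>0$ depending only on $B$ and the model: if one sample's gradient norm were far below the others, its features would be masked inside the projected gradient $\tilde g$, and no single affine map could simultaneously recover both the suppressed and the dominant samples, forcing a residual proportional to the norm disparity. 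Invoking Theorem~\ref{theorem1}, which gives that $(\theta^*,\phi^*)$ attains the global minimum of $L$, together with the tolerance $L(\theta^*,\phi^*)<\epsilon$ enforced by Algorithm~\ref{algorithm1}, I obtain $\mathrm{Var}(g_i)\le L(\theta^*,\phi^*)/c\le\epsilon$ after absorbing $c$ into the tolerance.

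The D-SNR bound then follows from Samuelson's inequality: for $B$ reals with mean $\mu_g$ and population variance $\mathrm{Var}(g_i)$ one has $\max_i g_i-\mu_g\le\sqrt{(B-1)\,\mathrm{Var}(g_i)}\le\sqrt{(B-1)\epsilon}$. Substituting into $\max_i g_i/(B\mu_g-\max_i g_i)$ and expanding about the uniform configuration $g_1=\cdots=g_B$, at which the ratio equals the natural baseline $1/(B-1)$, yields $D\text{-}SNR(\theta^*)\le \tfrac{1}{B-1}+\gamma\cdot\epsilon$, where $\gamma>0$ collects the factors $B$ and $\mu_g$ — precisely the batch-size and model-structure dependence asserted in the statement. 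The stated form $\gamma\cdot\epsilon$ is then recovered by measuring the excess D-SNR over the uniform baseline $1/(B-1)$, and taking the maximum over the finitely many blocks $W\in\theta_{lw}$ preserves the estimate.

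The main obstacle is the structural lower bound $L\ge c\cdot\mathrm{Var}(g_i)$ in the second step: it is the only inequality that genuinely encodes ``balanced reconstruction forces balanced gradient norms,'' and it is not merely formal. To make it rigorous I would use the linearity of $\mathcal{D}$ and the fixed-position projector $\Pi$ to write each per-sample residual explicitly, then lower-bound the residual of the sample attaining $\min_i g_i$ by its norm deficit relative to $\mu_g$; the continuity and compactness of Assumptions~\ref{assumption1}--\ref{assumption3} ensure that such a $c$ is attained and strictly positive. A cleaner but stronger route is to assume the linearized gradient-to-input map is well conditioned, in which case $c$ becomes its least singular value; I would flag this as the natural place an additional regularity assumption enters.
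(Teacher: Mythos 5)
The decisive step in your argument is the structural lower bound $L(\theta,\phi)\ge c\cdot\mathrm{Var}(g_i)$, and this is a genuine gap, not a deferred technicality. The loss in Equation~\eqref{eq:decoder} is computed from the \emph{batch-averaged} gradient: the decoder only ever sees $\Pi\bigl(\partial\ell(F(x,\theta),y)/\partial\theta\bigr)$, in which the per-sample contributions have already been summed, so the individual norms $g_i$ do not appear in $L$ at all. A linear decoder can recover all $B$ samples from that sum even when the $g_i$ are wildly unequal --- e.g.\ if the per-sample gradient contributions occupy disjoint coordinate blocks of $\tilde g$, the affine map just rescales each block independently and drives the residual to zero at arbitrary $\mathrm{Var}(g_i)$. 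Hence no uniform $c>0$ exists under Assumptions~\ref{assumption1}--\ref{assumption3} alone; compactness gives you that an infimum is attained, not that it is positive. Your own flag about needing the linearized gradient-to-input map to be well conditioned is exactly where the missing hypothesis lives, but adding it means the theorem is no longer proved under the stated assumptions. A second, smaller gap you share with the paper: attaining the global minimum of $L$ (Theorem~\ref{theorem1}) says nothing about how small that minimum is, so the quantitative $\epsilon$ has to be smuggled in from outside --- you import Algorithm~\ref{algorithm1}'s tolerance, the paper simply posits $\|x_i-x_i'\|_2\le\delta$ for ``a small constant $\delta$.''

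For comparison, the paper takes a different middle route: minimal loss $\Rightarrow$ every per-sample reconstruction error is at most $\delta$ $\Rightarrow$ (invoking Lipschitz continuity of $\mathcal{D}$ in the reverse direction) the projected per-sample gradients lie within $2L_D^{-1}\delta$ of one another $\Rightarrow$ $|g_i-g_j|\le 2M_\Pi L_D^{-1}\delta$ $\Rightarrow$ the variance bound. That step has its own defect --- $\mathcal{D}(\tilde g_i)$ and $\mathcal{D}(\tilde g_j)$ are close to the \emph{distinct} targets $x_i$ and $x_j$, not to each other, so their preimages need not be close --- but it at least works sample-by-sample rather than through the averaged gradient. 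Where you genuinely improve on the paper is the final step: you correctly observe that the D-SNR of Equation~\eqref{dsnr} equals $1/(B-1)$ at perfectly uniform norms, so the literal bound $\gamma\cdot\epsilon$ can only hold for the excess over that baseline, and Samuelson's inequality gives the right deviation control from the variance. The paper's corresponding step is a one-line monotonicity assertion (stated, incidentally, in the wrong direction). So your proposal is more careful at both ends but rests on an unproven and, as stated, false key lemma in the middle.
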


\begin{proof}
At the optimal $(\theta^*,\phi^*)$, the reconstruction loss (Equation~\eqref{eq:decoder}) attains its minimum, ensuring uniform reconstruction errors across the batch:
\begin{equation}
\|x_i - x'_i\|_2 \leq \delta, \quad \forall i,
\end{equation}
for a small constant $\delta$. Given the Lipschitz continuity \cite{rudin1976principles} of decoder $\mathcal{D}$ with constant $L_D$, we have:
\begin{equation}
\resizebox{\linewidth}{!}{$
\|\Pi(\nabla_{\theta^*}\ell(F(x_i,\theta^*),y_i))-\Pi(\nabla_{\theta^*}\ell(F(x_j,\theta^*),y_j))\|_2 \leq 2L_D^{-1}\delta.
$}
\end{equation}

Since the projector $\Pi$ preserves norm differences up to a constant factor $M_\Pi$, we derive:
\begin{equation}
|g_i - g_j| \leq 2M_\Pi L_D^{-1}\delta, \quad \forall i,j.
\end{equation}

Thus, gradient variance satisfies:
\begin{equation}
\mathrm{Var}(g_i) \leq 4(M_\Pi L_D^{-1})^2\delta^2 := \epsilon.
\end{equation}

As D-SNR monotonically decreases with gradient variance~\cite{Garov2024Hiding}, we conclude:
\begin{equation}
D-SNR(\theta^*)\leq \gamma\cdot\epsilon,
\end{equation}
establishing that gradients from $\theta^*$ remain indistinguishable from naturally trained models, ensuring stealthiness.
\end{proof}

% \begin{proof}
% The proof is in Appendix \ref{proof2}.
% \end{proof}
% \end{proof}

\section{Experimental Evaluation}
\label{sec:experiments}
In this section, we present a series of experiments to evaluate the effectiveness of the proposed method. The experiment results show that \name significantly outperforms the SOTA AGLAs in reconstruction quality and stealthiness.

\subsection{Setup}
We use the ResNet18 \cite{he2016deep} as the default global model for FL. The CIFAR10, CIFAR100 \cite{krizhevsky2009learning}, and TinyImageNet \cite{le2015tiny} datasets are employed as the training data for clients. 
% Following \cite{Garov2024Hiding}, we generally use the training set as auxiliary data and implement an attack on randomly sampled batches of size $B$ from the test set. 
The three classic evaluation metrics for reconstruction quality, namely PSNR \cite{hore2010image}, SSIM \cite{zhang2018unreasonable}, and LPIPS \cite{wang2004image}, are employed to assess the attack quality. We use the detection metric D-SNR \cite{Garov2024Hiding} to evaluate the stealthiness of model modifications. We set the default projection ratio to 0.4\% and employ a linear layer as the default structure for the discriminator. We compare our method with the closely related SOTA methods, Fishing \cite{wen2022fishing} and SEER \cite{Garov2024Hiding} with the maximal brightness as the selected property. As our method is the first to poison model parameters for enhanced data leakage across entire batches, we also evaluate its performance against popular naive model initialization methods, including Random, Xavier \cite{glorot2010understanding}, and He \cite{he2015delving}. In our implementation, Xavier initialization uses a uniform distribution to balance variance across layers, while He initialization adapts weights for leaky ReLU activations to ensure smoother gradient flow during training.
\begin{table}[h]
  \centering
  % \vspace{-0.3cm}
  \caption{Comparison of reconstruction performance among Fishing \cite{wen2022fishing}, SEER \cite{Garov2024Hiding}, and \name (Ours) on CIFAR100 with a batch size of 8.}
  % \vspace{-0.2cm}
  \begin{tabularx}{\linewidth}{c|XXX}
    \toprule
          & Min PSNR $\uparrow$ & Pruned Average PSNR $\uparrow$ & Max PSNR $\uparrow$ \\
    \midrule
    Fishing \cite{wen2022fishing}  & 0.00000     & 0.00000     & 12.92526 \\
    SEER \cite{Garov2024Hiding}  & 0.00000     & 0.00000     & 15.97548 \\
    \name (Ours)  & \cellcolor{customblue}\textbf{20.37788} & \cellcolor{customblue}\textbf{21.59001} & \cellcolor{customblue}\textbf{22.86605} \\
    \bottomrule
  \end{tabularx}%
  \label{compareFishSeer}%
  % \vspace{-0.3cm}
\end{table}%
\begin{table*}[b]
  \centering
  % \vspace{-0.3cm}
  \caption{Performance comparison of the proposed \name against baseline model initializations Random, Xavier, and He on CIFAR10, CIFAR100, and TinyImageNet datasets.}
  % \vspace{-0.3cm}
    \begin{tabular}{ccccccccccccc}
    \toprule
    \multirow{2}[4]{*}{Method} & \multirow{2}[4]{*}{Interation} & \multicolumn{3}{c}{CIFAR10} &       & \multicolumn{3}{c}{CIFAR100} &       & \multicolumn{3}{c}{TinyImageNet} \\
\cmidrule{3-5}\cmidrule{7-9}\cmidrule{11-13}          &       & PSNR $\uparrow$ & SSIM $\uparrow$ & LPIPS $\downarrow$ &       & PSNR $\uparrow$ & SSIM $\uparrow$ & LPIPS $\downarrow$ &       & PSNR $\uparrow$ & SSIM $\uparrow$ & LPIPS $\downarrow$ \\
    \midrule
    Random+iDLG & \multirow{4}[2]{*}{200} & 15.86852 & 0.595493 & 0.289589 &       & 16.996 & 0.537246 & 0.344143 &       & 14.04722 & 0.222929 & 0.575494 \\
    Xavier \cite{glorot2010understanding}+iDLG &       & 20.77170  & 0.78643  & 0.24686  &       & 19.85292  & 0.73102  & 0.26831  &       & 12.18539  & 0.23047  & 0.58592  \\
    He \cite{he2015delving}+iDLG &       & -1.15507  & -0.00052  & 0.72385  &       & -1.94603  & -0.00166  & 0.75271  &       & -1.05347  & -0.00043  & 0.80013  \\
    \name(Ours)+iDLG &       & \cellcolor{customblue}\textbf{29.70104}  & \cellcolor{customblue}\textbf{0.86494}  & \cellcolor{customblue}\textbf{0.10815}  &       & \cellcolor{customblue}\textbf{28.44256}  & \cellcolor{customblue}\textbf{0.89706}  & \cellcolor{customblue}\textbf{0.09068}  &       & \cellcolor{customblue}\textbf{19.94374}  & \cellcolor{customblue}\textbf{0.62674}  & \cellcolor{customblue}\textbf{0.22103}  \\
    \midrule
    Random+IG & \multirow{4}[2]{*}{1000} & 19.16213  & 0.62193  & 0.30759  &       & 19.34636  & 0.63830  & 0.31395  &       & 15.47002  & 0.25633  & 0.52080  \\
    Xavier \cite{glorot2010understanding}+IG &       & 24.47016  & 0.86330  & 0.14445  &       & 23.16149  & 0.80442  & 0.17232  &       & 13.06239  & 0.21848  & 0.57367  \\
    He\cite{he2015delving}+IG &       & 13.30730  & 0.10187  & 0.62424  &       & 10.97889  & 0.09065  & 0.66285  &       & 12.70339  & 0.20885  & 0.72560  \\
    \name(Ours)+IG &       & \cellcolor{customblue}\textbf{31.96512}  & \cellcolor{customblue}\textbf{0.91660}  & \cellcolor{customblue}\textbf{0.07358}  &       & \cellcolor{customblue}\textbf{31.55152}  & \cellcolor{customblue}\textbf{0.92673}  & \cellcolor{customblue}\textbf{0.06176}  &       & \cellcolor{customblue}\textbf{28.62325}  & \cellcolor{customblue}\textbf{0.91401}  & \cellcolor{customblue}\textbf{0.07571}  \\
    \bottomrule
    \end{tabular}%
  \label{compare2}%
  % \vspace{0.0cm}
\end{table*}%

\subsection{Main Results}
\textbf{Comparison between \name and SOTA AGLAs}.
% 这里写EGGV与SOTA（SEER, Fish）的对比
Firstly, we compare the performance of the proposed \name with two SOTA AGLAs, Fishing \cite{wen2022fishing}, and SEER \cite{Garov2024Hiding}, on CIFAR100 with a batch size of 8. Table \ref{compareFishSeer} reports the minimum PSNR, pruned average PSNR, and maximum PSNR over 100 batches, where a PSNR of 0 indicates no reconstruction. \name achieves significantly higher PSNR, consistently reconstructing all samples per batch with minimal variation, while SOTA methods reconstruct only one sample per batch. 
% We also provide a visual comparison of reconstruction results in Figure \ref{visualresults}.
The visual comparison in Figure~\ref{visualresults} demonstrates the superiority of the proposed EGGV method. Specifically, EGGV successfully reconstructs every sample in the batch with a high similarity to the originals. In contrast, SOTA methods (Fishing and SEER) reconstruct only a single image, and the similarity between these reconstructions and the originals is significantly lower than that of EGGV.
\begin{remark} 
Our method applies not only to model initialization but to any round in FL. Unlike prior work~\cite{wen2022fishing, zhang2022compromise} producing suspicious parameters (e.g., zeros or ones), our poisoned parameters exhibit natural distributions. Moreover, since each training round aggregates updates from multiple clients, individual clients remain unaware of the aggregated parameter state, enabling our attack to stealthily poison model parameters at any training round.
\end{remark}
\begin{figure*}[t]
    \centering
    % \vspace{0.0cm}  % 调整图片与上文的垂直距离
    \includegraphics[scale=0.2]{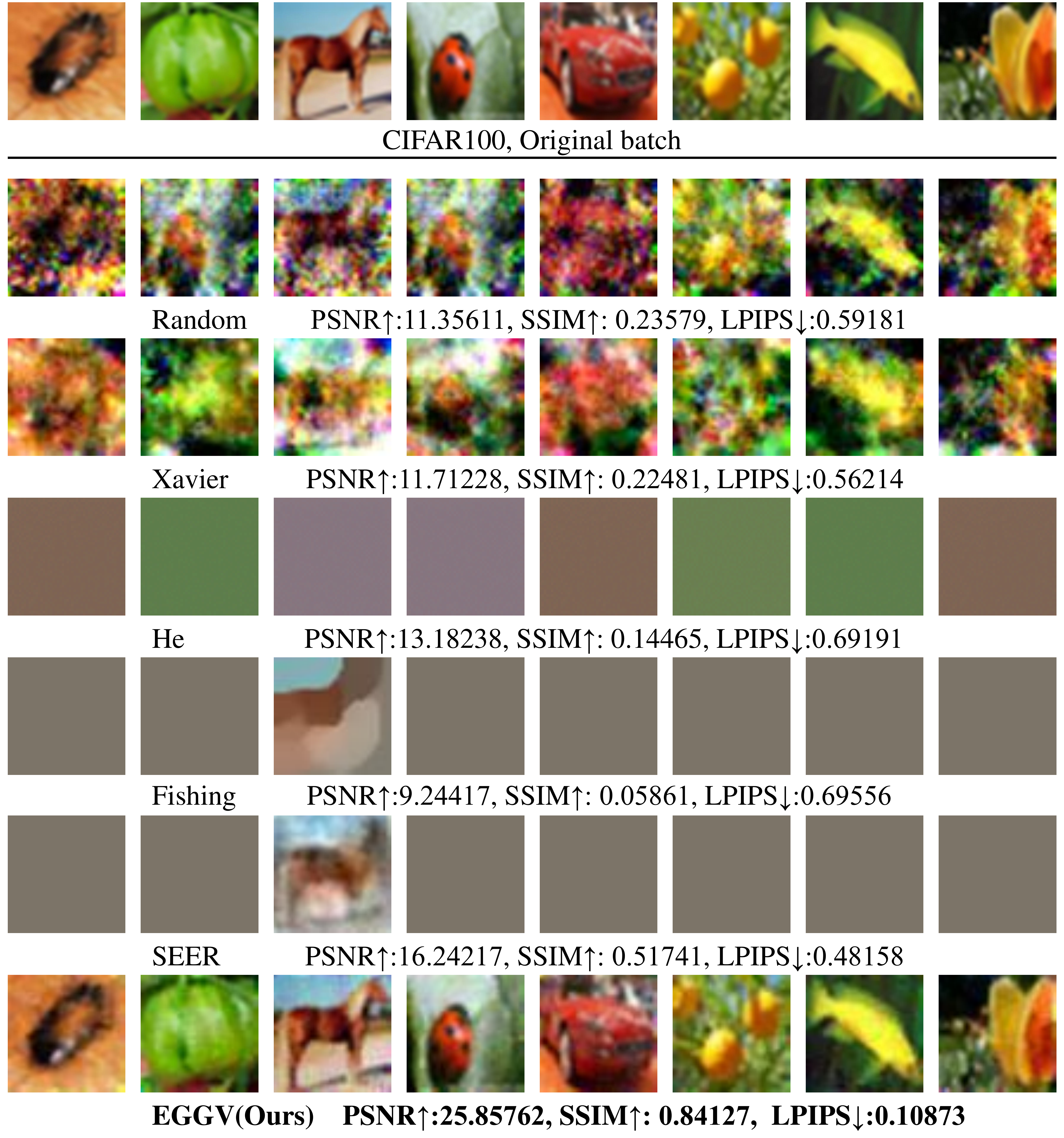}
    % \vspace{-0.3cm}
    \caption{Visual reconstruction of IG on the model with \name poisoning, Xavier initialization, and He initialization.}
    % \vspace{-0.2cm}
    \label{visualresults}
\end{figure*}
% \begin{table}[htbp]
%   \centering
%   \footnotesize
%   \caption{Add caption}
%     \begin{tabularx}{\linewidth}{c|ccc}
%     \toprule
%           & Min PSNR & Pruned Average PSNR & Max PSNR \\
%     \midrule
%     Fishing \cite{wen2022fishing}  & 0.00000     & 0.00000     & 12.92526 \\
%     SEER \cite{Garov2024Hiding}  & 0.00000     & 0.00000     & 15.97548 \\
%     EGGV (Ours)  & 20.37788 & 21.59001 & 22.86605 \\
%     \bottomrule
%     \end{tabularx}%
%   \label{tab:addlabel}%
% \end{table}%

\textbf{Comparison between \name and Popular Model Initialization Methods in Enhancing PGLAs}. Considering that \name is the first AGLA to reconstruct all samples in the batch, we compare its performance with three naive model initialization methods. We implement the iDLG \cite{zhao2020idlg} and IG \cite{geiping2020inverting} on models that proposed \name poisons, naively initialized by Random, Xavier, and He. As shown in Table \ref{compare2}, the \name significantly outperforms Random, Xavier, and He initialization methods across all three datasets, regardless of whether iDLG or IG is used. Figure \ref{visualresults} provides a visual comparison of reconstruction results, clearly illustrating the superiority of \name.

He initialization is widely recognized for its advantages in global model training when used by honest servers, but it often results in attack failures for adversaries, including the server itself. This indicates that relying solely on original model parameters results in poor attack performance. Our research further shows that to improve the effectiveness of attacks, adversaries cannot rely only on standard model parameters. Instead, they should adopt poisoning techniques like \name to actively manipulate the gradient space.

The experimental results also highlight the critical importance of the gradient position within the gradient space for the success of GLAs. Unfortunately, previous PGLAs have overlooked this factor. Traditional PGLAs are typically limited by the current state of the model parameters, thus making it difficult to achieve optimal results. Although AGLAs attempt to address this by poisoning model parameters, their effectiveness is limited to a small subset of samples within the batch, as illustrated in Figure \ref{visualresults}. Notably, \name is the first method to tackle this key challenge for both PGLAs and AGLAs by poisoning model parameters to enhance the gradient vulnerability across the entire batch.
% Our findings suggest that attackers should actively poison the model parameters to control the vulnerability of the gradient space. 
% Notably, the EGGV is the first to address this key challenge by providing a technique for poisoning model parameters and, consequently, the global vulnerability of the gradient space.

\textbf{Stealthiness Comparisons of \name with SOTA AGLAs}.
We calculate 100 gradients on CIFAR100 with each ResNet18 poisoned by Fishing, SEER, and \name and initialized by naive initialization methods Random, Xavier, and He. We then report the D-SNR values for these gradients. As illustrated in Figure \ref{dsnridsnr}, \name demonstrates high stealthiness, achieving D-SNR values similar to those of the naive initialization methods Random, Xavier, and He. In contrast, Fishing and SEER show significantly higher D-SNR values, suggesting that clients can detect these methods more easily. This is because \name evenly enhances the leakage potential of all samples without introducing any gradient bias. In contrast, SOTA methods exhibit biased gradients across all layers, making them more prone to detection.
\begin{figure}[t]
    \centering
    \vspace{0.0cm}  %调整图片与上文的垂直距离
    \includegraphics[scale=0.14]{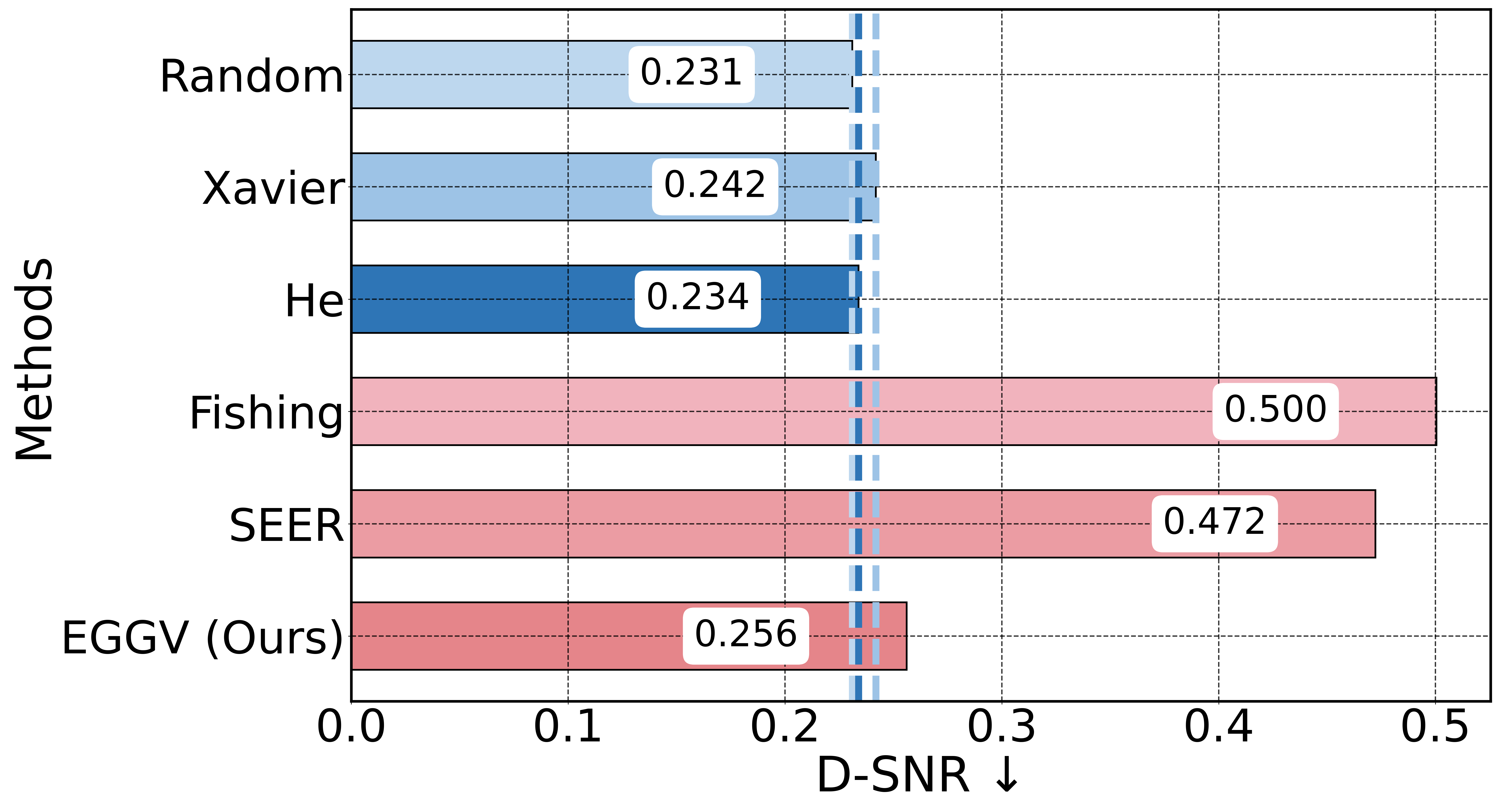}
    % \vspace{-0.3cm}
    \caption{Bar chart of D-SNR value of gradients generated by models with three naive initialization methods (Random, Xavier, He) and three poisoning methods (Fishing, SEER, \name) on 100 same batches. Lower values indicate greater stealthiness. \name achieves a highly stealthy, closely approaching D-SNR of standard initialization methods.}
    \label{dsnridsnr}
    \vspace{-0.3cm}
\end{figure}

\subsection{Ablation Study}
\textbf{Evaluating Gradient Space Vulnerability Using the Discriminator Instead of End-to-End Iterative Attacks}. We now turn to explore the effectiveness of using a discriminator to assess gradient space vulnerability, as opposed to traditional end-to-end iterative attacks. We randomly select two model directions, $x$ and $y$, in the model parameters space and systematically shift the poisoned model parameter $\theta^{*}$ along these axes, generating 441 model parameters. The discriminator evaluates the gradient vulnerability for each of these parameters, and the resulting contour map of gradient vulnerability is shown in Figure \ref{results2}(a). In this map, we select four points: $\theta_{1}$, $\theta_{2}$, $\theta_{3}$, $\theta^{*}$, with corresponding vulnerability scores of 14.99276, 5.42985, 1.28999, and 0.00655 assigned by the discriminator. Subsequently, we perform the IG attack on these four models with the CIFAR10 dataset. Figure \ref{results2}(b) depicts the PSNR convergence during the attacks on these four models. As expected, $\theta^{*}$ yields the best reconstruction. The reconstructed images are highly similar to the original input, and the PSNR value remains the highest throughout the convergence process, significantly outperforming the other three parameters. These findings demonstrate the discriminator’s effectiveness in evaluating gradient space vulnerability and predicting the likelihood of a successful reconstruction attack. In contrast to traditional end-to-end reconstruction methods, this method enables attackers to quickly identify and poison model parameters that could lead to attack failure, thereby improving the overall success rate for attacks.
\begin{figure*}[b]
    \centering
    % \vspace{-0.3cm}  %调整图片与上文的垂直距离
    \includegraphics[scale=0.155]{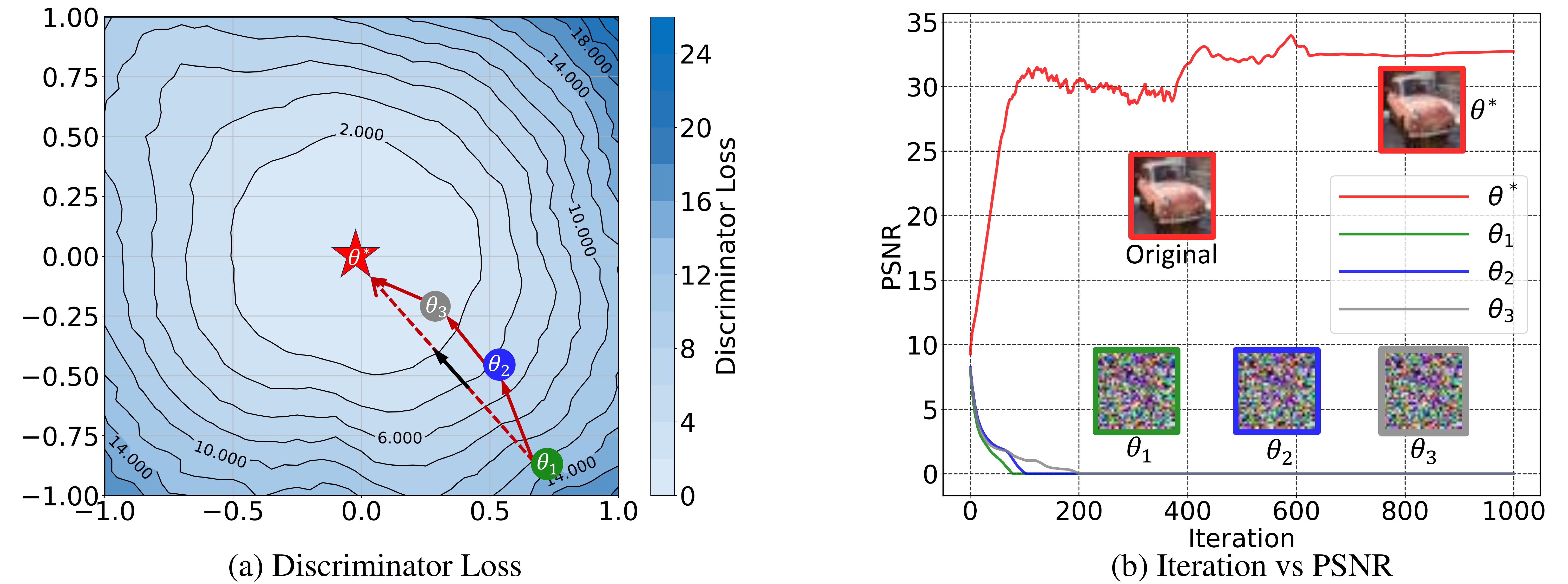}
    % \vspace{-0.3cm}
    \caption{Figure (a): Contour map of the discriminator loss landscape across 441 model parameters generated by shifting $\theta^{*}$ along two random directions. The map marks the selected points $\theta_{1}$, $\theta_{2}$, $\theta_{3}$, and $\theta^{*}$. Figure (b): PSNR curves from IG attacks on a ResNet18 model at $\theta_{1}$, $\theta_{2}$, $\theta_{3}$, and $\theta^{*}$, demonstrating that $\theta^{*}$ achieves the best reconstruction quality.}
    % \vspace{0.0cm}
    \label{results2}
\end{figure*}

\textbf{Exploring the Relationship Between Gradient Space Vulnerability and Model Accuracy}.
To explore the relationship between gradient vulnerability and model accuracy, we conduct experiments by shifting the poisoned model parameter $\theta^{*}$ evenly 21 times along two randomly selected directions, $x$ and $y$, generating 441 model parameters. Each parameter receives a gradient vulnerability score from the discriminator, visualized in the 3D surface plot at the top of Figure \ref{dsnridsnr}, representing the gradient vulnerability landscape across the parameters space. We then evaluate the classification accuracy of these same 441 model parameters using the CIFAR10 dataset, producing the lower plot of Figure \ref{dsnridsnr}. This plot shows the model accuracy at the same parameter positions as in the gradient vulnerability plot. A comparison between the two plots reveals that the model parameters with the highest gradient vulnerability do not coincide with those that yield the highest accuracy. In fact, model parameters with the greatest gradient vulnerability often show low accuracy, indicating no direct correlation between gradient vulnerability and model accuracy.
\begin{figure*}[t]
    \centering
    % \vspace{0.0cm}  % 调整图片与上文的垂直距离
    \includegraphics[scale=0.135]{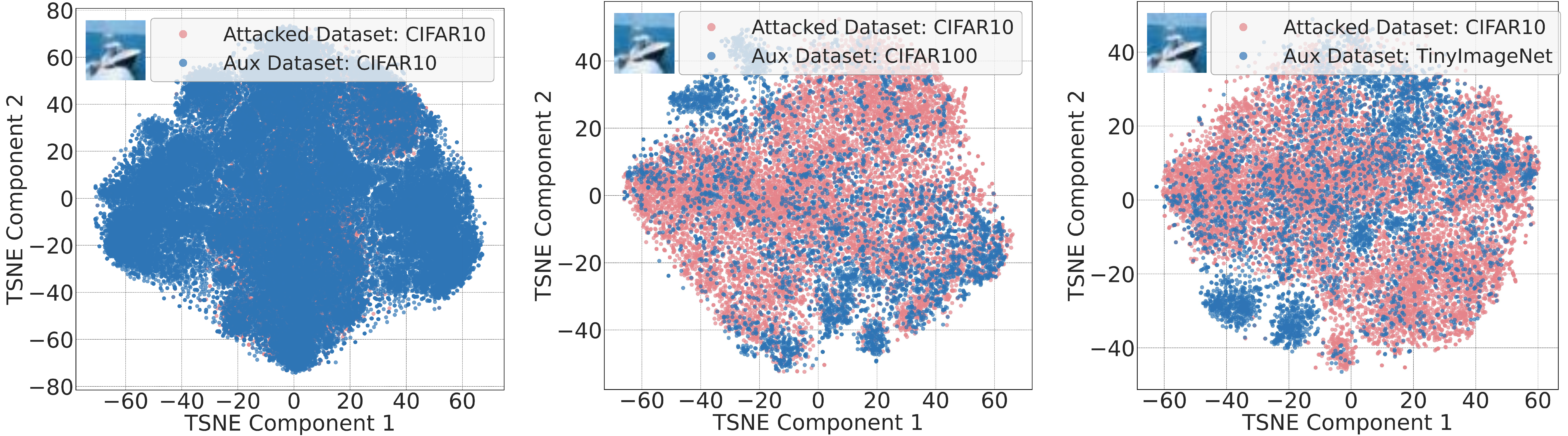}
    % \vspace{-0.3cm}
    \caption{Visualization of distributional differences between auxiliary and target datasets and IG reconstruction results on \name-poisoned models using each auxiliary dataset.}
    % \vspace{-0.4cm}
    \label{diffaux}
\end{figure*}
\begin{figure}[h]
    \centering
    % \vspace{0.0cm}  %调整图片与上文的垂直距离
    \includegraphics[scale=0.14]{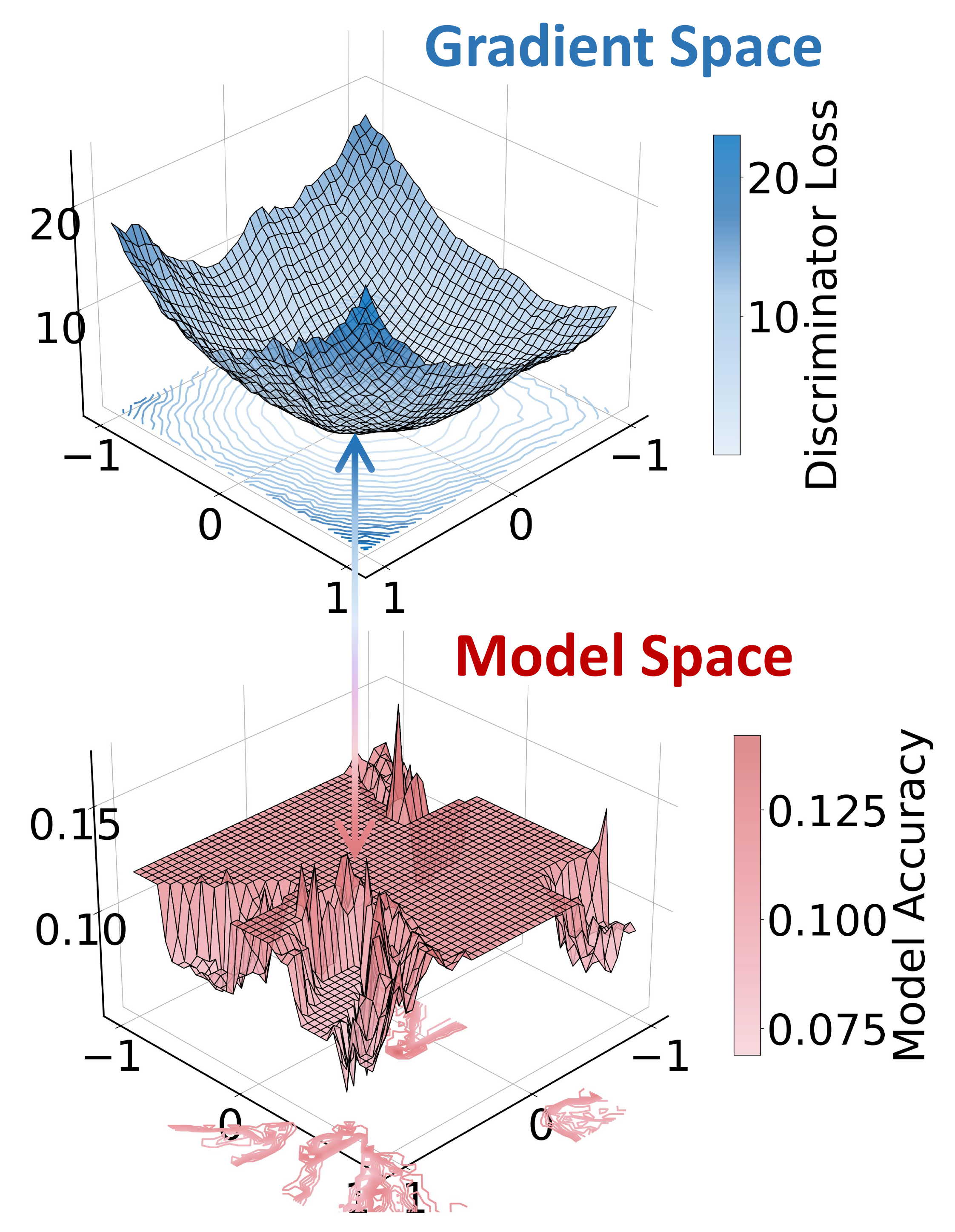}
    % \vspace{-0.2cm}
    \caption{3D surface plots of the discriminator loss (top) and model accuracy (bottom) for the same 441 model parameters shifted $\theta^{*}$ along two random directions, illustrating the relationship between gradient vulnerability and model accuracy across the parameters space.}
    \label{model_gradient_space}
    % \vspace{-0.3cm}
\end{figure}
\begin{table}[t]
  \centering
  % \footnotesize
  % \vspace{0.0cm}
  \caption{Reconstruction results of IG attack ResNet18 poisoned by \name with projection ratios of 1.6\%, 0.8\%, and 0.4\%. \name at 0.4\% projection ratio achieves the best overall performance.}
  % \vspace{-0.2cm}
  \resizebox{\linewidth}{!}{ % 这里开始缩放
    \begin{tabularx}{\linewidth}{c|XXX}
    \toprule
    Initiation Method & \multicolumn{1}{c}{PSNR $\uparrow$} & \multicolumn{1}{c}{SSIM $\uparrow$} & \multicolumn{1}{c}{LPIPS $\downarrow$} \\
    \midrule
    Random + IG & 15.47002  & 0.25633  & 0.52080 \\
    Xavier \cite{glorot2010understanding} + IG & 13.06239  & 0.21848  & 0.57367 \\
    He \cite{he2015delving} + IG & 12.70339  & 0.20885  & 0.72560 \\
    \name (Ours) ($\rho: 1.60\%$) + IG & 25.24615 & 0.80658 & 0.12803 \\
    \name (Ours) ($\rho: 0.80\%$) + IG & 27.63074 & 0.85759 & 0.09526 \\
    \name (Ours) ($\rho: 0.40\%$) + IG & \cellcolor{customblue}\textbf{28.62325} & \cellcolor{customblue}\textbf{0.91402} & \cellcolor{customblue}\textbf{0.07575} \\
    \bottomrule
    \end{tabularx}
  } % 这里结束缩放
  \label{difprojectratio}%
  % \vspace{-0.4cm}
\end{table}

\textbf{The Effect of Different Gradient Projection Ratios}.
% on Guiding the Gradient Space Vulnerability}
A crucial component of the \name is the projector, which compresses high-dimensional gradients into a one-dimensional vector. Next, we examine how different projection ratios influence \name to enhance the vulnerability of the gradient space. We select commonly used Random, Xavier, and He initialization methods as comparison benchmarks, and set three different gradient projection ratios of 1.60\%, 0.80\%, and 0.40\%. IG is used to conduct gradient leakage on the TinyImageNet dataset with the models initialized by the Random and \name. Comparison experimental results in Table \ref{difprojectratio} show that \name consistently enhances the gradient vulnerability across all three projection ratios, outperforming Random, Xavier, and He initialization methods. We observe an interesting phenomenon: the smallest projection ratio of 0.4\% achieves the best effect. This phenomenon can be attributed to the fact that smaller projection ratios force the model to embed more data features into the entire gradient, ensuring that the projected gradients retain enough data features to be inverted by the discriminator back to the original input. Under a smaller projection ratio, the model will more actively adjust its own parameters, thus containing more data features in the entire gradient, which is more conducive to the attack of subsequent attack methods.

\textbf{Exploring the Effect of Distribution Differences Between Auxiliary and Target Datasets on \name}.
Next, we explore the effect of distributional differences between auxiliary and target datasets on the performance of the \name. We select CIFAR10 as the target dataset and CIFAR10, CIFAR100, and TinyImageNet as the auxiliary datasets, respectively. To align the class counts with CIFAR10 to ensure compatibility for model poisoning, 10 classes are randomly sampled from CIFAR100 and TinyImageNet to serve as auxiliary datasets. Table \ref{diffauxtable} reports the performance of iDLG and IG attacks on \name-poisoned models under the above setting. The results show that \name achieves comparable PSNR values across CIFAR10, CIFAR100, and TinyImageNet auxiliary datasets, highlighting its robustness to distributional differences between auxiliary and target datasets. In addition, we use the t-SNE algorithm \cite{van2008visualizing} to reduce the dimensionality of these datasets to 2D and visualize the data distribution of the auxiliary and target datasets, along with the corresponding reconstruction results shown in Figure \ref{diffaux}. The visualizations confirm that the effectiveness of the \name attack is independent of distributional differences between the auxiliary and target datasets. In contrast, some SOTA methods, such as SEER, require the auxiliary dataset to be the training dataset of the target dataset, which limits their application to practical FL systems.
% Table generated by Excel2LaTeX from sheet 'Sheet4'
\begin{table}[t]
  \centering
  \footnotesize % 更小的字体
  \vspace{0.0cm}
  \caption{Performance of iDLG and IG attacks on \name-poisoned models with CIFAR10 as target and different auxiliary datasets, showing \name's robustness to distributional shifts.}
  \vspace{-0.2cm}
    \begin{tabularx}{\linewidth}{cXXX}
    \toprule
    \multirow{2}[3]{*}{\parbox[t]{2.3cm}{\centering Attacked Dataset: CIFAR10}} & \multicolumn{3}{c}{Auxiliary Datasets (PSNR $\uparrow$)} \\
    \cmidrule{2-4}          & \centering CIFAR10 & CIFAR100 & TinyImageNet \\
    \midrule
    \name + iDLG & 29.70104  & 29.93789  & 31.83867  \\
    \name + IG & 31.96511  & 31.28306  & 31.64095  \\
    \bottomrule
    \end{tabularx}%
    % \vspace{-0.4cm}
  \label{diffauxtable}%
\end{table}%

\section{Conclusion}
\label{sec:conclusion}
% In this work, we identify that all prior AGLAs suffer from detectability issues and incomplete attack coverage, and we develop stealthy and comprehensive MGLAs to address these challenges. Firstly, we propose FHT, a poisoning method for targeted leakage of sensitive data within a batch. FHT enables fast and targeted data leakage with high stealthiness. Additionally, we propose EGGV, a model poisoning method that enhances the performance of MGLAs across an entire batch. This is the first MGLA to inverse entire data samples and extensive experiments also confirm that our method significantly enhances the steathiness and coverage of the SOTA. Finally, we introduce iD-SNR, a new detection metric to raise the sensitivity of the SOTA metric to biased gradients. We encourage the federated learning community to explore further privacy protection mechanisms to counter these emerging security challenges.

In this work, we introduce a new backdoor-theoretic perspective to rethink and frame existing AGLAs. Through this lens, we theoretically identify that all prior AGLAs suffer from incomplete attack coverage and detectability issues. We further propose \name, a new solution that extends existing AGLAs to be more comprehensive and stealthier to address the above challenges. \name is the first AGLA capable of fully inverting all samples within a target batch while evading existing detection metrics. Extensive experiments demonstrate that EGGV significantly outperforms SOTA AGLAs in both stealthiness and attack coverage. These results encourage the FL community to explore further privacy protection mechanisms to counter these emerging security risks.

%{\appendix[Experimental validation for Proposition \ref{cal_loss_yue}]
%\input{appendix}}

\bibliographystyle{IEEEtran}
\bibliography{reference.bib}
% Kunlan~Xiang,
% Haomiao~Yang,~\IEEEmembership{Senior Member,~IEEE,}
% Meng Hao,~\IEEEmembership{Student Member,~IEEE,}
% Shaofeng Li,
% Haoxin Wang,
% Zikang Ding,
% Wenbo Jiang,~\IEEEmembership{Member,~IEEE,}
% Tianwei~Zhang,~\IEEEmembership{Member,~IEEE,}
        % and~Hongwei~Li,~\IEEEmembership{Fellow,~IEEE}

\begin{IEEEbiography}[{\includegraphics[width=1.0in,height=1.33in,clip,keepaspectratio]{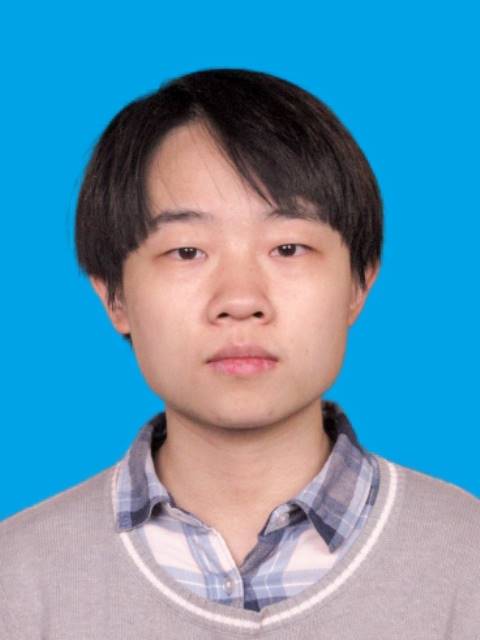}}]{Kunlan Xiang} pursued her M.S. degree from 2022 to 2024 at the School of Computer Science, University of Electronic Science and Technology of China (UESTC), where she is currently working toward the Ph.D. degree. Her research interests include deep learning, AI security, and federated learning security.
\end{IEEEbiography}
% \vspace{-0.8cm}
\begin{IEEEbiography}[{\includegraphics[width=1in,height=1.33in,clip,keepaspectratio]{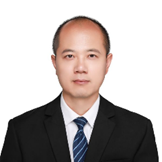}}]{Haomiao Yang}
received the M.S. and Ph.D. degrees in computer applied technology from the University of Electronic Science and Technology of China (UESTC), Chengdu, China, in 2004 and 2008, respectively. He has worked as a Postdoctoral Fellow with the Research Center of Information Cross over Security, Kyungil University, Gyeongsan, South Korea, for one year until June 2013. He is currently a Professor with the School of Computer Science and Engineering and the Center for Cyber Security, UESTC. His research interests include cryptography, cloud security, and cybersecurity for aviation communication.
\end{IEEEbiography}
% \vspace{-1.6cm}
\begin{IEEEbiography}[{\includegraphics[width=1in,height=1.33in,clip,keepaspectratio]{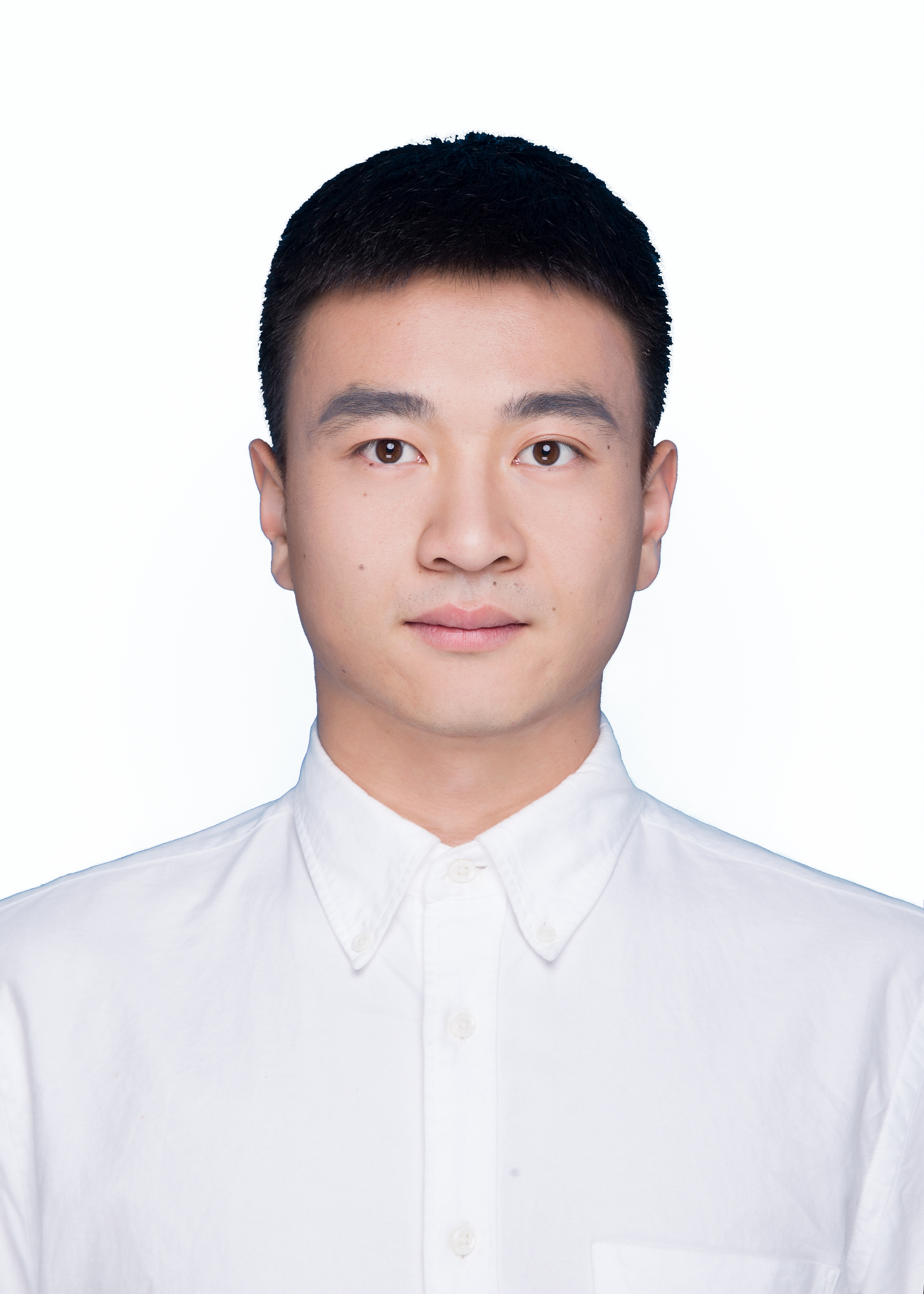}}]{Meng Hao} is currently a Research Scientist at Singapore Management University (SMU). Before that, he obtained his Ph.D degree in 2024 from University of Electronic Science and Technology of China (UESTC). He was also a visiting Ph.D student at Nanyang Technological University. His research interests mainly focus on applied cryptography and machine learning security.
\end{IEEEbiography}
% \vspace{-1.6cm}
\begin{IEEEbiography}[{\includegraphics[width=1in,height=1.33in,clip,keepaspectratio]{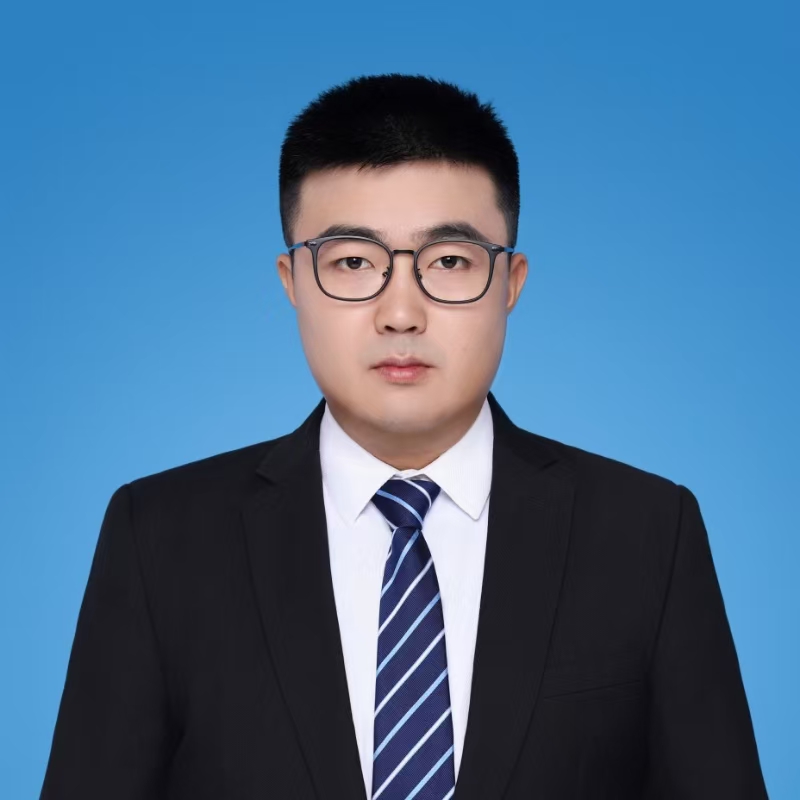}}]{Shaofeng Li} received the Ph.D. degree in the Department of Computer Science and Engineering of Shanghai Jiao Tong University in 2022. He is currently an associate professor at the School of Computer Science and Engineering, Southeast University, Nanjing, China. He focuses primarily on the areas of machine learning and security, specifically exploring the robustness of machine learning models against various adversarial attacks.
\end{IEEEbiography}
\begin{IEEEbiography}[{\includegraphics[width=1in,height=1.33in,clip,keepaspectratio]{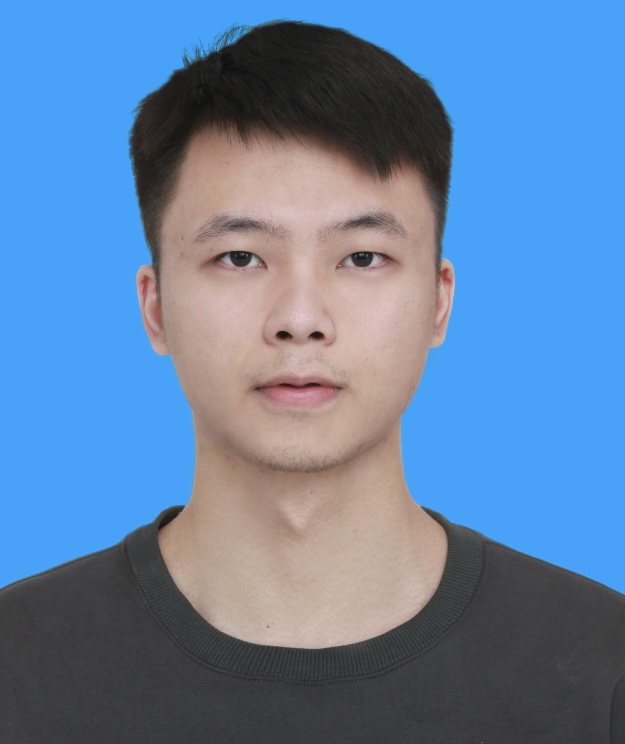}}]{Haoxin Wang} received her B.S. degree the Xi’an University of Technology (XUT) in 2022. He is currently pursuing the M.S. degree in Sichuan University. His research interests include deep learning, artificial intelligence (AI), and time series forecasting.
\end{IEEEbiography}
% \vspace{-1.6cm}
\begin{IEEEbiography}[{\includegraphics[width=1in,height=1.33in,clip,keepaspectratio]{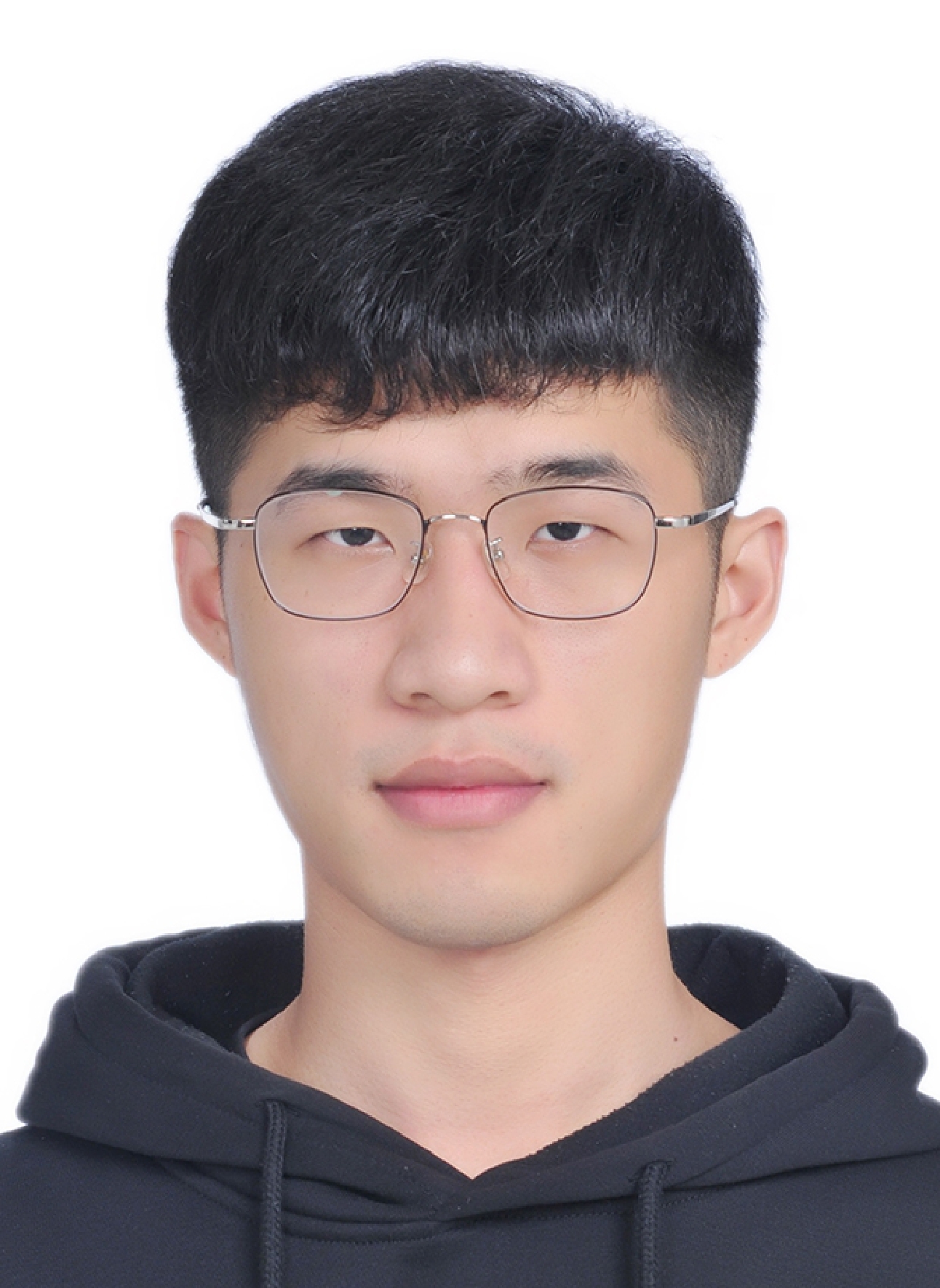}}]{Zikang Ding} received a master's degree in software engineering from East China Normal University. He is currently a Ph.D student at the University of Electronic Science and Technology of China. His current research interests include data privacy and security issues, federated learning, and artificial intelligence security.
\end{IEEEbiography}
% \vspace{-1.6cm}
\begin{IEEEbiography}[{\includegraphics[width=1in,height=1.33in,clip,keepaspectratio]{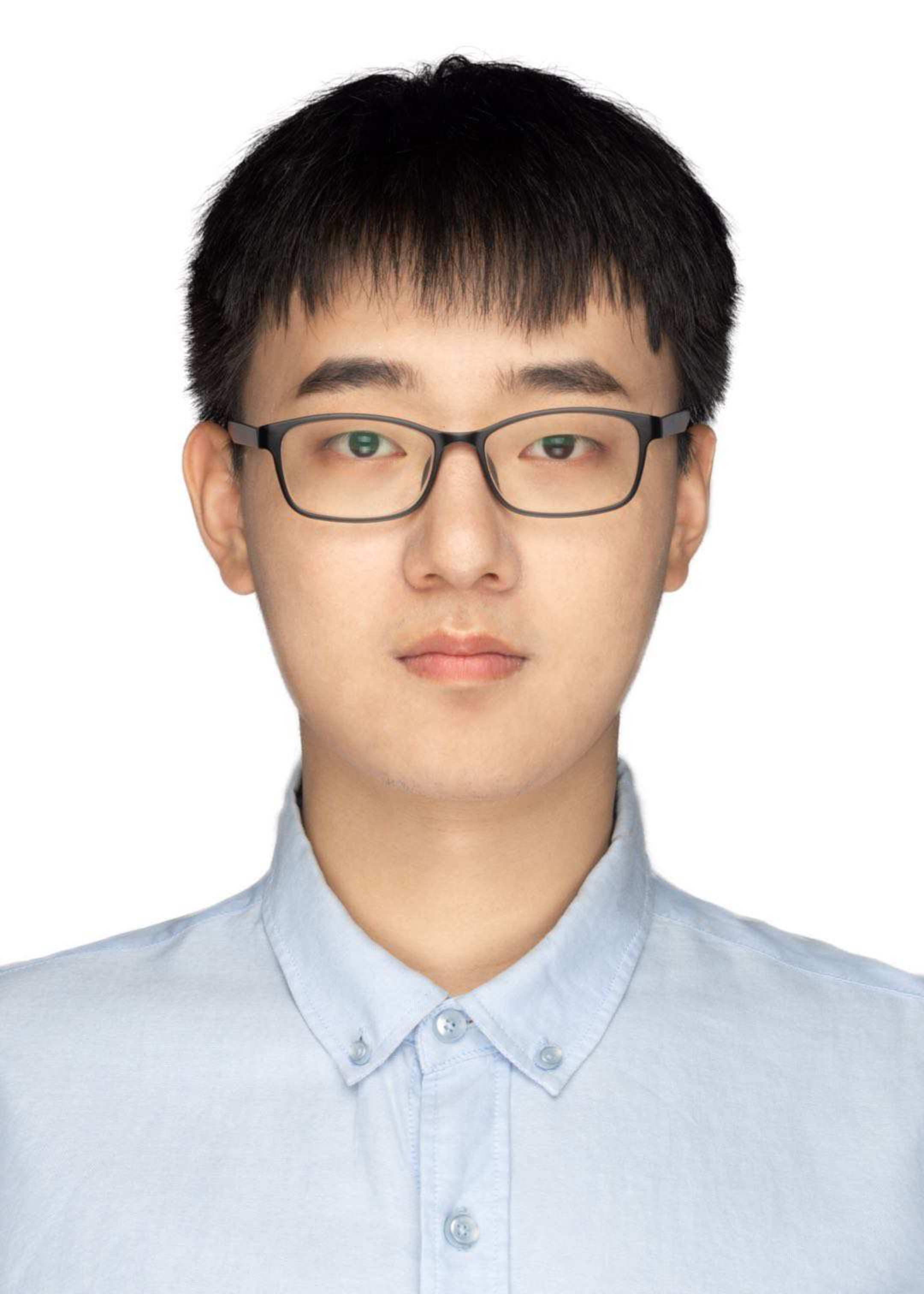}}]{Wenbo Jiang} received the Ph.D. degree in cybersecurity from University of Electronic Science and Technology of China (UESTC) in 2023 and studied as a visiting Ph.D. student from Jul. 2021 to Jul. 2022 at Nanyang Technological University, Singapore. He is currently a Postdoc at UESTC. He has published papers in major conferences/journals, including IEEE CVPR, IEEE TDSC, etc. His research interests include machine learning security and data security.
\end{IEEEbiography}
\begin{IEEEbiography}[{\includegraphics[width=1in,height=1.33in,clip,keepaspectratio]{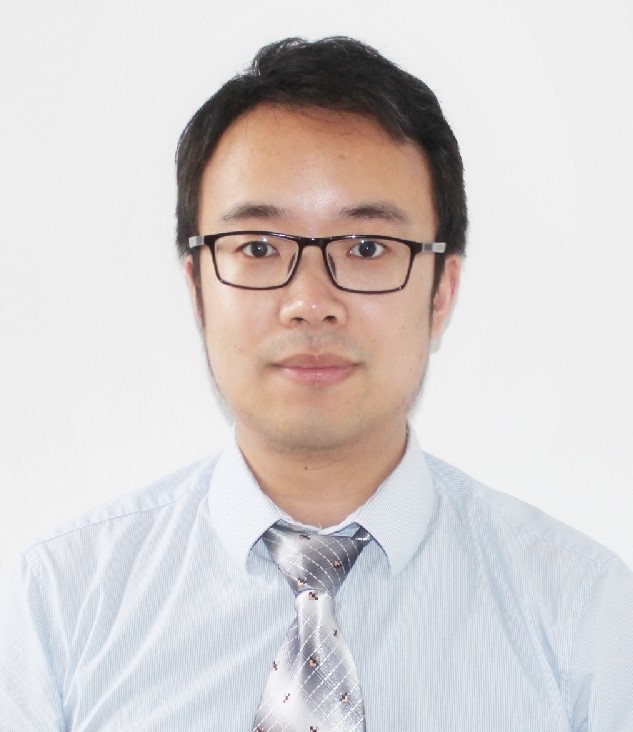}}]{Tianwei Zhang} received the bachelor’s degree from Peking University in 2011, and the PhD degree from Princeton University in 2017. He is an assistant professor with the School of Computer Science and Engineering, Nanyang Technological University. His research focuses on computer system security. He is particularly interested in security threats and defenses in machine learning systems, autonomous systems, computer architecture and distributed systems.
\end{IEEEbiography}
% \vspace{-1.6cm}
% \begin{IEEEbiography}[{\includegraphics[width=1in,height=1.33in,clip,keepaspectratio]{}}]{Hongwei Li} (Fellow, IEEE) received the Ph.D. degree in computer software and theory from the University of Electronic Science and Technology of China, Chengdu, China, in 2008. He is currently a Professor with the School of Computer Science and Engineering, University of Electronic Science and Technology of China. His research interests include network security, applied cryptography, and trusted computing. He is a member of China Computer Federation.
% \end{IEEEbiography}

\end{document}